\documentclass[11pt,a4paper]{article}

\usepackage[utf8]{inputenc}
\usepackage[T1]{fontenc}
\usepackage{amsmath,amssymb,amsthm,mathrsfs}
\usepackage{geometry}
\usepackage{hyperref}
\usepackage{cleveref}
\usepackage{enumitem}
\usepackage{booktabs}
\usepackage{array}
\usepackage{graphicx}
\usepackage{xcolor}
\usepackage{titlesec}
\usepackage{setspace}
\usepackage{fancyhdr}
\usepackage{natbib}

\newtheorem{postulate}{Postulate}[section]

\newtheorem{proposition}{Proposition}

\newtheorem{law}{Law}
\newtheorem{definition}{Definition}
\newtheorem{principle}{Principle}
\newtheorem{conjecture}{Conjecture}

\newcommand{\Phyg}{\mathscr{P}}
\newcommand{\Uspace}{\mathscr{U}}
\newcommand{\Dspace}{\mathscr{D}}
\newcommand{\Sspace}{\mathscr{S}}
\newcommand{\Fib}{\mathscr{F}}
\newcommand{\OmMass}{\boldsymbol{\mu}}
\newcommand{\PhyForce}{\boldsymbol{\Phi}}
\newcommand{\LieD}{\mathcal{L}}

\titleformat{\section}{\Large\bfseries}{\thesection.}{0.5em}{}
\titleformat{\subsection}{\large\bfseries}{\thesubsection.}{0.5em}{}
\titleformat{\subsubsection}{\normalsize\bfseries\itshape}{\thesubsubsection.}{0.5em}{}

\hypersetup{
  colorlinks=true,
  linkcolor=blue!60!black,
  citecolor=blue!60!black,
  urlcolor=blue!60!black
}

\begin{document}

% ---- Title Page ----
\begin{titlepage}
\centering
\vspace*{3cm}
{\Huge\bfseries The Unified Field Theory\\[0.3cm] of Phygital Space\par}
\vspace{0.8cm}
{\Large A Sheaf-Theoretic Framework with Finsler Geometry,\\
Autopoietic Dynamics, and Non-Equilibrium Thermodynamics\par}
\vspace{2cm}
{\Huge Silvio Meira\par}
\vspace{0.3cm}
{\normalsize silvio@meira.com\par}
\vspace{0.3cm}
{\Large TDS.company \quad$\vert$\quad cesar.school\par}
\vfill
{\normalsize v0.2, April 2026\par}
\end{titlepage}

% ---- Abstract ----
\begin{abstract}
\noindent
This paper proposes a Unified Field Theory of Phygital Space, positing that contemporary reality is not a dichotomy of ``online'' and ``offline,'' but a unified ontological manifold of irreducible but coupled dimensions. 

We formalize Phygital Space as a \emph{sheaf} over a topological site composed of the Physical~($\Uspace$), Networked Digital~($\Dspace$), and Networked Social~($\Sspace$) dimensions, grounded in \emph{Informaticity}---the triune capacity to compute, communicate, and control---and instantiated through \emph{Platforms}.

We develop a framework incorporating \emph{Finsler geometry} to model asymmetric cross-dimensional interaction costs, define \emph{Ontological Mass}~($\OmMass$) as a positive semidefinite tensor encoding directional resistance to change, and introduce \emph{autopoietic dynamics} to account for the endogenous agency of persons, algorithms, and social formations. 

We propose a non-equilibrium thermodynamic model where economic value is negentropy generated by platforms acting as \emph{dissipative structures}. We introduce \emph{Temporal Shear} formalized through Lie derivatives. 

The framework is applied analytically to the Chinese e-commerce ecosystem (1999--2025), extended to a post-human ecology of \emph{Synthetic Agents}, and developed into normative implications for platform governance and human flourishing. 

We present this as a \emph{formal research program}---a structured set of postulates, derived propositions, and testable conjectures---rather than a completed axiomatic system.
\end{abstract}

\newpage
\tableofcontents
\newpage

% ============================================================
% SECTION 0: INTRODUCTION
% ============================================================
\section{Introduction: Beyond Dualism}

The history of Western metaphysics, from the Cartesian \emph{res cogitans} and \emph{res extensa} to Castells's dialectic of the ``space of places'' and the ``space of flows,'' has struggled with the relationship between the material and the informational \citep{descartes1641, castells1996}. We note at the outset that Castells himself recognized the dialectical coexistence of these spaces---places and flows interpenetrate rather than partition reality \citep[ch.~6]{castells1996}. Nevertheless, even Castells's nuanced framework lacks the formal machinery to \emph{quantify} the friction of a digital transaction, the \emph{mass} of a social media influencer, or the \emph{entropy} of an online community. We inhabit a \textbf{Phygital Space}: a generative matrix where materiality, information, and sociality are inextricably co-constituted. What we lack is a formal language for its physics.

\subsection{Related Work and Positioning}

Several traditions have anticipated aspects of our framework. \citet{kitchin2011} developed the thesis that software and physical space are ``transduced'' into a single ontology, introducing the concept of \emph{code/space} where code and spatiality are mutually constituted. \citet{couldry2017} formulated the thesis of ``deep mediatization,'' arguing that all social construction of reality is now mediated by digital infrastructures. \citet{hayles1999} anticipated our discussion of synthetic agents by analyzing the conditions under which posthuman subjectivity emerges from human-machine symbiosis. 

\citet{floridi2014} introduced the concept of the ``infosphere'' and, in later work \citep{floridi2019}, developed a logic of information that formalizes some of the ontological claims we make here. \citet{latour2005} distributed agency between human and non-human actants in a manner that our Section~5 formalizes for synthetic entities. \citet{luhmann1995} extended autopoiesis from biological to social systems, a move we adopt and discuss critically. Our contribution is not the \emph{insight} that the physical, digital, and social are co-constituted---this has been established by the traditions cited above. 

Our contribution is \textbf{twofold}. 

First, we posit the existence of a unified \emph{Phygital Space}---a structured manifold of physical, digital, and social dimensions---in which agents with compound, cross-dimensional behaviors interact, compete, and co-evolve. Prior work has described the co-constitution of the physical and the digital \citep{kitchin2011}, the deep mediatization of social reality \citep{couldry2017}, and the distribution of agency across human and non-human actants \citep{latour2005}. But none has proposed a single space in which all three dimensions are formally integrated, where entities possess measurable coordinates in each dimension, and where the coupling between dimensions is encoded in a geometric structure that constrains and shapes all interaction. 

Second, we attempt to provide \emph{formal structure} for this space: a geometric framework (fiber bundles, Finsler metrics) and a dynamical framework (autopoietic equations of motion, non-equilibrium thermodynamics) capable of generating testable propositions and guiding empirical research. 

We offer this not as a completed axiomatic system in the Hilbertian sense, but as a \emph{formal research program} in the sense of \citet{lakatos1978}: a hard core of foundational postulates surrounded by a protective belt of auxiliary hypotheses, derived propositions, and conjectures that can be refined, tested, and---crucially---falsified.

\subsection{The Three Structural Challenges}

A theory of Phygital Space must overcome three challenges: (a)~the Physical, Digital, and Social dimensions are \emph{coupled}, not independent; (b)~entities possess \emph{endogenous agency}---persons, algorithms, and social formations generate motion without requiring external forces; and (c)~cross-dimensional movement costs are \emph{asymmetric}---abstraction (physical$\to$digital) differs categorically from materialization (digital$\to$physical).

We address (a) with a sheaf-theoretic fiber bundle, (b) with autopoietic dynamics drawing on Maturana and Varela's concept as extended by \citet{luhmann1995} to social systems (a contested extension whose limitations we acknowledge; see \citealp{mingers1995}), and (c) with Finsler geometry.

\subsection{The Central Thesis}

Phygital Space is a \emph{generative matrix} whose dimensional structure actively shapes being, behavior, and becoming:

\begin{description}[style=unboxed,leftmargin=0.5cm]
  \item[Ontology is Sheaf-Theoretic Geometry.] Dimensions are coupled; local coherence does not guarantee global extendability.
  \item[Interaction is Finslerian Physics.] Costs are asymmetric and direction-dependent.
  \item[Agency is Autopoietic.] Entities possess intrinsic dynamics; external forces modulate but do not exclusively drive trajectories.
  \item[Value is Negentropy in a Dissipative System.] Platforms are far-from-equilibrium structures.
  \item[Time is a Lie-Derivative Field.] Temporal Shear~$(\sigma_\tau)$ threatens the stability of entities navigating multiple temporalities.
\end{description}

\subsection{Structure of the Argument}

Section~1 establishes the geometry. Section~2 introduces autopoietic dynamics. Section~3 develops non-equilibrium thermodynamics. Section~4 formalizes Temporal Shear. Section~5 extends the model to Synthetic Agents. Section~6 applies the theory analytically to the Chinese e-commerce ecosystem (1999--2025). Section~7 articulates normative implications. The Conclusion synthesizes findings and the future research agenda.

% ============================================================
% SECTION 1: GEOMETRY
% ============================================================
\newpage
\section{The Postulates --- The Sheaf-Theoretic Geometry of Phygital Space}

We propose four foundational postulates. We call them \emph{postulates} rather than \emph{axioms} to signal our intent honestly: they function as the foundational assumptions of a research program, not as the axioms of a completed formal system from which all propositions can be deduced by pure inference. Euclid's postulates served a similar role for millennia before Hilbert's formalization; we aspire to the former, not the latter, at this stage of the theory's development. Where propositions \emph{can} be derived from the postulates, we derive them. Where they cannot yet be derived, we state them as conjectures.

\subsection{Postulate I: The Principle of the Fibered Manifold}

\begin{postulate}[The Fibered Manifold]\label{post:fiber}
Phygital Space~$(\Phyg)$ is a fiber bundle over the Physical dimension~$(\Uspace)$, with fibers at each point encoding the available Digital~$(\Dspace)$ and Social~$(\Sspace)$ states. An entity~$E$ in Phygital Space is a \emph{section} of this bundle.
\end{postulate}

\noindent\textbf{Formal Definition.} Let $\Uspace$ be a topological space representing the Physical dimension. Over each point $u \in \Uspace$, define a fiber $\Fib(u) = \Dspace(u) \times \Sspace(u)$. The total space is:
\begin{equation}
  \Phyg = \bigsqcup_{u \in \Uspace} \{u\} \times \Fib(u),
\end{equation}
with projection $\pi\colon \Phyg \to \Uspace$. The coupling is encoded by a connection~$\nabla$ on the bundle, specifying how the fiber ``twists'' as one moves through the base space \citep{nakahara2003}.

This structure accommodates \textbf{dimensional holonomy}: when an entity traverses a closed loop in $\Uspace$ (travels abroad and returns), its digital and social coordinates may not return to their original values. Holonomy is a precise measure of irreducible coupling between dimensions.

\subsection{Postulate II: The Sheaf Condition}

\begin{postulate}[The Sheaf Condition]\label{post:sheaf}
The assignment of Phygital states to regions of the manifold satisfies the sheaf condition: local data compatible on overlaps glue uniquely into global data. Obstructions to gluing exist and are topological invariants of the manifold \citep{bredon1997}.
\end{postulate}

The sheaf condition explains why ``scaling'' a platform is not replicating code. When Uber works in San Francisco but not in Riyadh, the obstruction is a cohomological class encoding regulatory, cultural, and infrastructural incompatibilities \citep{spivak2014}.

\subsection{Postulate III: The Finsler Metric of Interaction Cost}

\begin{postulate}[The Finsler Quasimetric]\label{post:finsler}
Distance in Phygital Space is a Finslerian measure---an asymmetric, direction-dependent cost function on the tangent bundle.
\end{postulate}

\noindent\textbf{Formal Definition.} A Finsler function $F\colon T\Phyg \to \mathbb{R}_{\geq 0}$ satisfies: (i)~$F(x, \lambda v) = \lambda\, F(x, v)$ for $\lambda > 0$; (ii)~the Hessian $g_{ij}(x,v) = \tfrac{1}{2}\,\partial^2 F^2 / \partial v^i \partial v^j$ is positive definite. We do \emph{not} require $F(x, v) = F(x, -v)$. Distance along $\gamma$ is:
\begin{equation}
  \delta(E_1, E_2) = \int_0^1 F\bigl(\gamma(t),\, \dot{\gamma}(t)\bigr)\, dt,
\end{equation}
and generically $\delta(E_1, E_2) \neq \delta(E_2, E_1)$ \citep{bao2000}.

Why Finsler rather than a generic quasimetric or a Lorentzian structure? A generic quasimetric provides no geometric structure---no geodesics, no curvature, no tangent-space decomposition. Finsler geometry is the \emph{minimal} generalization of Riemannian geometry that accommodates asymmetry while retaining differentiable structure rich enough to define geodesics (optimal paths through the manifold), curvature (how the presence of massive entities warps the cost landscape), and parallel transport (how dimensional states evolve along trajectories). Lorentz geometry would impose a specific signature $(-,+,+,\ldots)$ without justification. The positive homogeneity condition (i) is a simplifying idealization: it asserts that doubling the ``velocity'' of dimensional movement exactly doubles the cost. Empirically, this may not hold precisely---there may be economies or diseconomies of scale in cross-dimensional translation. We adopt it as a first approximation, noting that deviations from homogeneity are second-order corrections that do not alter the qualitative predictions.

\subsubsection{The Three Asymmetric Frictions}

The asymmetry of the Finsler function decomposes into three directional costs, each corresponding to a distinct type of cross-dimensional translation. These are not merely different magnitudes of the same process; they are categorically different operations, governed by different logics and subject to different constraints.

\medskip\noindent\textbf{The Friction of Abstraction} $(\eta_{\mathrm{phy}\to\mathrm{dig}})$. Moving from the Physical to the Digital requires \emph{abstraction}---the reduction of matter to data. As \citet{shannon1948} demonstrated, this process is inherently lossy: the qualia of the physical object (texture, temperature, weight, smell, the way light plays across a surface) are stripped away to fit within digital bandwidth. A photograph of a sculpture is not the sculpture; a digital twin of a factory is not the factory. The informational loss is:
\begin{equation}
  \mathcal{L}_{\mathrm{abs}} = u_{\mathrm{input}} - d_{\mathrm{output}},
\end{equation}
where the cost of minimizing $\mathcal{L}_{\mathrm{abs}}$ grows with the fidelity required. Low-resolution abstraction is cheap (a product listing with a single photo); high-resolution abstraction is expensive (a photogrammetric 3D scan with haptic feedback data). The friction of abstraction explains the persistence of the ``uncanny valley'' \citep{mori1970} and the limitations of virtual reality: perfect abstraction would require capturing and encoding the full physical state of an object, which is thermodynamically prohibitive. This friction is relatively stable and monotonically related to the complexity of the source object.

\medskip\noindent\textbf{The Friction of Materialization} $(\eta_{\mathrm{dig}\to\mathrm{phy}})$. Moving from the Digital to the Physical requires \emph{materialization}---the ``bits-to-atoms'' transition that creates massive resistance \citep{gershenfeld2005}. Digital bits are malleable, copyable at near-zero marginal cost, and not subject to gravity. Atoms are constrained by thermodynamics, friction, scarcity, and entropy. This asymmetry is categorical: it is far easier to photograph a sculpture than to 3D-print one from its digital model; far easier to list a product online than to deliver it to a doorstep in a remote village. The ``last mile'' problem in logistics is the canonical manifestation of materialization friction---the cost of bridging the final gap between digital order and physical possession. Critically, $\eta_{\mathrm{dig}\to\mathrm{phy}} \gg \eta_{\mathrm{phy}\to\mathrm{dig}}$ in virtually all practical cases, and this inequality is the single most important structural feature of the Phygital manifold. It explains why e-commerce platforms that master digital aggregation (low $\eta_{\mathrm{phy}\to\mathrm{dig}}$) still struggle with delivery (high $\eta_{\mathrm{dig}\to\mathrm{phy}}$), and why logistics infrastructure is the decisive competitive asset in mature phygital markets.

\medskip\noindent\textbf{The Friction of Legitimization} $(\eta_{\mathrm{soc}})$. Movement across \emph{any} dimension requires social legitimization---the consent of the social field to recognize the transaction as valid. A digital currency has no value ($\mu \to 0$) without a social consensus to accept it. A physical object has no phygital value without social attribution of meaning. A digital service has no market without trust. This friction is governed by what \citet{luhmann1995} called the ``double contingency'' of social systems: each party's willingness to act depends on their expectation of the other's willingness, creating a recursive dependency that can only be broken by trust, reputation, or institutional guarantee. The adoption curve of innovation \citep{rogers2003} is essentially a measurement of overcoming $\eta_{\mathrm{soc}}$---the progressive social legitimization of a new technology or practice from early adopters through to laggards. Social friction is often the \emph{highest} and most \emph{volatile} of the three: a platform can be technically flawless and logistically efficient, yet fail completely if it cannot generate social trust. Conversely, a technically inferior platform with strong social legitimization (community endorsement, regulatory approval, cultural resonance) can dominate one with superior technology. This is why social friction is the wild card of the Phygital manifold---the dimension most resistant to engineering and most susceptible to sudden collapse.

\begin{proposition}[Asymmetry of Geodesics]\label{prop:asymmetry}
Let $\gamma^*_{AB}$ be the geodesic (cost-minimizing path) from $E_A$ to $E_B$ in $\Phyg$, and $\gamma^*_{BA}$ the geodesic from $E_B$ to $E_A$. Under Postulate~III, generically $\gamma^*_{AB} \neq \gamma^*_{BA}$ as curves, and $\delta(E_A, E_B) \neq \delta(E_B, E_A)$.
\end{proposition}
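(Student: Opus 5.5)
To prove "generically," I would exhibit a single Finsler function satisfying Postulate~III for which both inequalities hold strictly, and then argue that strict inequality is an open condition. Combined with density of such perturbations in the space of admissible $F$, this gives genericity in the usual sense.

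\textbf{Base example: a Randers metric.} The natural explicit witness is a Randers metric
\[
F(x,v)=\sqrt{a_{ij}(x)v^iv^j}+b_i(x)v^i, \qquad \|b\|_a<1 .
\]
Here $a$ is Riemannian and $b$ is a one-form. Such an $F$ is positively homogeneous and has positive definite fundamental tensor, so it satisfies (i) and (ii). Reversing a curve $\gamma$ gives
\[
L_F(\bar\gamma)=L_a(\gamma)-\int_\gamma b ,
\]
and the cost of any path from $E_A$ to $E_B$ is $L_a(\gamma)+\int_\gamma b$. I would consider two cases for the one-form $b$.

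\textbf{Case 1: $b$ closed but not zero.} If $b=df$ is exact, then
\[
\delta(E_A,E_B)-\delta(E_B,E_A)=2\bigl(f(E_B)-f(E_A)\bigr).
\]
This is nonzero whenever $f(E_A)\neq f(E_B)$, which gives the distance asymmetry. The geodesics, however, coincide as point sets, because adding an exact form does not change the minimizers. So this case alone does not give asymmetric curves.

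\textbf{Case 2: $b$ not closed.} This is the case needed for the curve claim. I would take the local model $\Phyg\supset\mathbb{R}^2$ with $a$ Euclidean and $b=\epsilon\,x\,dy$, so $db\neq0$. The Euler--Lagrange equations for a Randers metric are the $a$-geodesic equation plus a Lorentz-type force term $\pm(db)(\dot\gamma,\cdot)$, whose sign flips under time reversal. Hence the forward geodesic from $E_A$ to $E_B$ curves to one side, while the reversed backward geodesic curves to the other side. To make this concrete I would check to first order in $\epsilon$ that the $A\to B$ minimizer is a circular arc of curvature $\approx\epsilon$ bending one way. The $B\to A$ minimizer, traversed backwards, bends the opposite way. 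So the two are distinct curves as soon as $\epsilon\neq0$ and $E_A\neq E_B$. A suitable exact term can be added to get the distance inequality simultaneously.

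\textbf{Genericity.} Strict inequality of the two distances is an open condition, because distance depends continuously on $F$ in the $C^0$ topology on compact sets. Distinctness of the geodesics is also open, since nondegenerate minimizers depend continuously on $F$ in $C^2$. Conversely, any reversible $F$ (with $F(x,v)=F(x,-v)$) can be perturbed by adding a small $\epsilon b$ as above. So the set of non-generic $F$ has empty interior, which is the intended meaning of "generically."

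The main obstacle is Case~2: proving rigorously that the minimizing curves differ. This needs existence and uniqueness of minimizers, which I would secure by working in a convex neighbourhood with $E_A$ and $E_B$ close, as in \citet{bao2000}. It also requires that the perturbation argument really separates the two curves rather than only separating their lengths.
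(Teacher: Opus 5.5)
Your core construction is correct and takes a genuinely different route from the paper. One step of the genericity argument needs repair.

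\textbf{Comparison with the paper.} The paper argues abstractly: if $F(x,v)\neq F(x,-v)$, reversal changes the spray $G^i$, hence the geodesic, hence the integrand, so the distances generically differ; it then cites Bao--Chern--Shen. You instead give an explicit Randers witness and make ``generically'' precise as an open-dense condition. This buys two things the paper's argument lacks.
\begin{enumerate}
\item Your Case~1 shows that the paper's inference ``changes the spray, therefore changes the geodesic'' is too quick. The metric $\alpha+df$ is non-reversible, and its spray does change under reversal. But it changes only by a multiple of $\dot x^i$, i.e.\ a reparametrization, so forward and backward minimizers coincide as point sets. What is needed is a \emph{non-projective} change, which for Randers metrics means $db\neq0$.
\item You decouple the two claims, which the paper derives one from the other. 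Case~1 has asymmetric distances with identical curves. Conversely, in Case~2 the reflection $x\mapsto -x$ carries $F$ to $\bar F(x,v)=F(x,-v)$. So points on the $y$-axis have distinct curves but equal distances, and your added exact term is genuinely needed there.
\end{enumerate}
Case~2 is also easier than you fear. With $a$ Euclidean and $db=\epsilon\,dx\wedge dy$ constant, the Euler--Lagrange equations in Euclidean arclength are exactly $\ddot x=\epsilon\dot y$, $\ddot y=-\epsilon\dot x$. Every geodesic is therefore an exact circular arc of radius $1/|\epsilon|$ turning one way, and reversed backward minimizers lie on the opposite side of the chord. No first-order approximation is required.

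\textbf{The step to repair is density.} You show only that \emph{reversible} $F$ can be perturbed out of the bad set. But the bad set also contains non-reversible metrics, for example your own Case~1 metrics, so ``empty interior'' does not yet follow.

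For distances the fix is exact and works for every $F$: adding $\epsilon\,df$ shifts $\delta(E_A,E_B)-\delta(E_B,E_A)$ by $2\epsilon\bigl(f(E_B)-f(E_A)\bigr)$.

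For curves, run your sign-flip argument without assuming reversibility.
\begin{itemize}
\item Suppose $\gamma$ is both the $F$-minimizer and the $\bar F$-minimizer from $E_A$ to $E_B$. Perturb $F\mapsto F+\epsilon\beta$, so that $\bar F\mapsto \bar F-\epsilon\beta$.
\item Write nearby curves as graphs over $\gamma$, and let $w=d\beta(\cdot,\dot\gamma)$ be the first variation of $\int\beta$.
\item The first-order displacements then satisfy $H_F\xi_+=-w$ and $H_{\bar F}\xi_-=w$. Here $H_F$ and $H_{\bar F}$ are the second variations of the two length functionals, and both are positive definite in a convex neighbourhood.
\item If $\xi_+=\xi_-=\xi$, then $\xi\neq0$ whenever $w\neq0$. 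Pairing with $\xi$ gives $w(\xi)<0$ from the first equation and $w(\xi)>0$ from the second, a contradiction.
\end{itemize}
So any $\beta$ with $w\neq0$ separates the curves. Combined with your openness argument, both good sets are open and dense, and hence so is their intersection.
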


\begin{proof}
In a Finsler manifold with $F(x,v) \neq F(x,-v)$, the geodesic equation $\ddot{x}^i + 2G^i(x,\dot{x}) = 0$ depends on the spray coefficients $G^i$, which are themselves direction-dependent via $F$. Reversing the direction of traversal changes the spray, and therefore changes the geodesic. The length functional is evaluated on different curves with different integrands, so the distances generically differ. This is a standard result in Finsler geometry \citep[Thm.~6.2.1]{bao2000}.
\end{proof}

This proposition has non-trivial content: it tells us that the \emph{optimal strategy} for moving from physical to digital differs from the optimal strategy for moving from digital to physical. Abstraction and materialization are not reverses of each other---they follow different paths through the manifold.

\subsection{Postulate IV: The Ontological Mass Tensor}

\begin{postulate}[The Mass Tensor]\label{post:mass}
Every entity~$E$ possesses an Ontological Mass Tensor~$\OmMass(E)$, a symmetric \textbf{positive semidefinite} matrix:
\begin{equation}
  \OmMass(E) = 
  \begin{pmatrix}
    \mu_{pp} & \mu_{pd} & \mu_{ps} \\
    \mu_{dp} & \mu_{dd} & \mu_{ds} \\
    \mu_{sp} & \mu_{sd} & \mu_{ss}
  \end{pmatrix},
  \quad \OmMass \succeq 0.
  \label{eq:mass-tensor}
\end{equation}
The \emph{rank} of $\OmMass(E)$ classifies the entity's dimensional presence: $\mathrm{rank}(\OmMass) = 3$ for entities with non-degenerate presence in all three dimensions; $\mathrm{rank}(\OmMass) = 2$ for entities absent from one dimension; $\mathrm{rank}(\OmMass) = 1$ for entities confined to a single dimension.
\end{postulate}

\noindent\textbf{Note on positive semidefiniteness.} We require $\OmMass \succeq 0$ (all eigenvalues $\geq 0$), not $\OmMass \succ 0$ (all eigenvalues $> 0$). This is essential: entities that lack physical embodiment (synthetic agents) have $\mu_{pp} = 0$, producing a rank-deficient tensor. A positive-definiteness requirement would exclude such entities from the theory. The rank classification introduces a natural taxonomy: biological entities with bodies, devices, and social ties have rank~3; synthetic agents have rank~2; a purely physical object with no digital or social presence has rank~1.

\begin{proposition}[Dimensional Lock-In]\label{prop:lockin}
If the off-diagonal coupling terms $\mu_{pd}$, $\mu_{ps}$, or $\mu_{ds}$ are large relative to the diagonal terms, then a force applied in one dimension produces acceleration in all coupled dimensions, with the cross-dimensional response governed by $\OmMass^{-1}$ (or its pseudoinverse for rank-deficient tensors).
\end{proposition}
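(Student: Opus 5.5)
The plan is to read Proposition~\ref{prop:lockin} as a statement about a linear response law and make it precise. I will posit the generalized Newtonian relation $\PhyForce = \OmMass\, a$, where $a = (a_p, a_d, a_s)^\top$ is the acceleration of the entity's section in local coordinates adapted to the bundle of Postulate~\ref{post:fiber}. The proof then has three steps.

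\textbf{Step 1 (invertible case).} When $\mathrm{rank}(\OmMass)=3$, Postulate~\ref{post:mass} with full rank gives $\OmMass \succ 0$, so $a = \OmMass^{-1}\PhyForce$. Take a force applied in one dimension, say $\PhyForce = f\,e_p$. The response is then $a = f\,\OmMass^{-1}e_p$, which is the first column of $\OmMass^{-1}$. By the adjugate formula,
\begin{equation*}
(\OmMass^{-1})_{dp} = \frac{\mu_{ds}\mu_{sp}-\mu_{dp}\mu_{ss}}{\det\OmMass},\qquad (\OmMass^{-1})_{sp} = \frac{\mu_{dp}\mu_{sd}-\mu_{dd}\mu_{sp}}{\det\OmMass}.
\end{equation*}
If every off-diagonal term vanishes, these cross responses are zero, so the dimensions decouple. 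If the off-diagonal terms are nonzero, the responses are generically nonzero.

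\textbf{Step 2 (the ``large relative to diagonal'' claim).} Here I will show that the cross responses grow as coupling grows. I will introduce the normalized couplings $\rho_{ij} = \mu_{ij}/\sqrt{\mu_{ii}\mu_{jj}}$. Positive semidefiniteness forces $|\rho_{ij}|\le 1$, so ``large'' has to mean that some $\rho_{ij}$ approaches $1$. In that limit $\det\OmMass \to 0$ while the numerators above stay bounded away from zero, so the off-diagonal entries of $\OmMass^{-1}$ blow up relative to its diagonal-dominant part.

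A two-dimensional block makes this explicit. The ratio of cross to direct response, $|(\OmMass^{-1})_{dp}|/(\OmMass^{-1})_{pp}$, equals $|\mu_{pd}|/\mu_{dd}$. This ratio is increasing in the coupling, which is the precise sense of lock-in: a push in one dimension is mostly converted into motion along the dominant eigenvector.

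\textbf{Step 3 (rank-deficient case).} For $\mathrm{rank}(\OmMass)<3$ I will use the Moore--Penrose pseudoinverse $\OmMass^{+}$. I will show that the minimum-norm solution $a = \OmMass^{+}\PhyForce$ is exact when $\PhyForce \in \mathrm{range}(\OmMass)$, and is otherwise the least-squares solution. The cross-dimensional structure follows from the spectral decomposition $\OmMass^{+} = \sum_{\lambda_k>0}\lambda_k^{-1} v_k v_k^\top$: a unit force along $e_i$ excites $e_j$ exactly when some eigenvector $v_k$ with $\lambda_k>0$ has nonzero components in both $i$ and $j$. Nonzero off-diagonal entries guarantee this, by the same argument as in the full-rank case restricted to $\mathrm{range}(\OmMass)$.

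The main obstacle is that the paper supplies no dynamical law linking $\OmMass$ to force and acceleration, so $\PhyForce = \OmMass a$ must be adopted as an interpretive hypothesis consistent with Postulate~\ref{post:mass}. A related difficulty is making ``large'' rigorous. I will handle it with the normalized couplings $\rho_{ij}$, and state the result as a monotonicity and limit statement rather than a sharp threshold.
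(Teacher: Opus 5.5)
Your route is the paper's own. Its proof simply invokes Law~II (Section~2), which stipulates $\vec a=\OmMass^{-1}\PhyForce$ for rank~3 and $\vec a=\OmMass^{+}\PhyForce$ otherwise. The relation you adopt as an interpretive hypothesis is therefore a law of the theory. The paper then reads the cross response off the off-diagonal entries of $\OmMass^{-1}$. Your cofactors in Step~1 and the $2\times2$ ratio $|\mu_{pd}|/\mu_{dd}$ are correct and more careful than the paper.

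The gap is in the $3\times3$ part of Step~2 (``the numerators above stay bounded away from zero'') and in Step~3 (``nonzero off-diagonal entries guarantee this''). There you upgrade Step~1's ``generically nonzero'' to a guarantee, and that upgrade is false. Take
\[
\OmMass_r=\begin{pmatrix}1&r^2&r\\ r^2&1&r\\ r&r&1\end{pmatrix},\qquad 0<r<1 .
\]
It is positive definite with $\det\OmMass_r=(1-r^2)^2$, and as $r\to1$ every normalized coupling tends to $1$ and $\det\OmMass_r\to0$. Yet the cofactor $\mu_{ds}\mu_{sp}-\mu_{dp}\mu_{ss}=r^2-r^2$ vanishes identically, so $(\OmMass_r^{-1})_{dp}=0$. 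A push in $p$ never accelerates $d$, although $\rho_{pd}=r^2\to1$; meanwhile $(\OmMass_r^{-1})_{sp}=-r/(1-r^2)$ diverges. In general $(\OmMass^{-1})_{dp}=0$ exactly when $\rho_{dp}=\rho_{ds}\rho_{sp}$, i.e.\ zero partial coupling of $p$ and $d$ given $s$. This hypersurface meets the positive definite cone at couplings arbitrarily close to $1$.

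The same cancellation defeats your spectral criterion in Step~3, which is only necessary because the terms $\lambda_k^{-1}(v_k)_i(v_k)_j$ can cancel. For example, $\OmMass=\tfrac19\left(\begin{smallmatrix}5&-4&1\\-4&5&1\\1&1&2\end{smallmatrix}\right)$ has rank~2 and no zero off-diagonal entry. Its range eigenvectors $(1,-1,0)/\sqrt2$ and $(1,1,2)/\sqrt6$ (eigenvalues $1$ and $\tfrac13$) both have nonzero $p$- and $d$-components. Yet $(\OmMass^{+})_{dp}=-\tfrac12+\tfrac12=0$.

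So what is provable is weaker than the proposition's wording. For a single coupled pair (a decoupled $2\times2$ block), the cross response is nonzero and the cross-to-direct ratio increases with the coupling. In the full $3\times3$ case, the cross response is nonzero only off the cofactor hypersurface, i.e.\ generically, and it need not grow with $|\mu_{pd}|$. The paper's final sentence (``monotonically related to the off-diagonal coupling strength'') fails for the same reason. With unit diagonal and $\mu_{ps}=\mu_{ds}=\tfrac12$, raising $\mu_{pd}$ from $0$ to $\tfrac14$ drives $|(\OmMass^{-1})_{dp}|$ from $\tfrac12$ down to $0$. State your result in the $2\times2$/generic form rather than trying to recover the literal claim.

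Note also that Steps~2 and~3 do not join up. Consider a decoupled $2\times2$ block. As $\rho_{pd}\to1^-$, the cross response $-\mu_{pd}/(\mu_{pp}\mu_{dd}-\mu_{pd}^2)$ diverges with sign opposite to $\mu_{pd}$, carried by the soft eigenvector of $\OmMass$. Your ``dominant eigenvector'' must mean dominant for $\OmMass^{-1}$. At $\rho_{pd}=1$, by contrast, the pseudoinverse annihilates exactly that mode and returns the finite value $\mu_{pd}/(\mu_{pp}+\mu_{dd})^2$, of the same sign as $\mu_{pd}$. The limit you use to make ``large'' precise is precisely where Law~II jumps. Hence the Step~2 blow-up does not carry over to the rank-deficient case ``by the same argument''.
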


\begin{proof}
By the equation of motion (Law~II, Section~2), $\vec{a} = \OmMass^{-1}\cdot\PhyForce$. For a $3\times 3$ positive semidefinite matrix, the inverse (or Moore-Penrose pseudoinverse for rank-deficient cases) couples the force components: $a_i = \sum_j (\OmMass^{-1})_{ij} \Phi_j$. When $\mu_{pd}$ is large, the $(p,d)$ and $(d,p)$ entries of $\OmMass^{-1}$ are non-negligible, so a force $\Phi_d$ in the digital dimension produces acceleration $a_p$ in the physical dimension and vice versa. The magnitude of cross-dimensional response is monotonically related to the off-diagonal coupling strength.
\end{proof}

\subsubsection{Summary of the Geometric Foundation}

We have defined: the \emph{structure} (fiber bundle with sheaf conditions), the \emph{ruler} (Finsler quasimetric with justified asymmetry), and the \emph{object} (positive semidefinite mass tensor with rank-based classification). From these postulates, we have derived two propositions: the generic asymmetry of geodesics (Proposition~\ref{prop:asymmetry}) and the mechanism of dimensional lock-in (Proposition~\ref{prop:lockin}).

% ============================================================
% SECTION 2: DYNAMICS
% ============================================================
\newpage
\section{The Physics --- Autopoietic Dynamics, Conservation, and Friction}

We replace Newtonian dynamics with an autopoietic framework. The term ``autopoietic'' requires careful qualification. \citet{maturana1980} defined autopoiesis for biological systems: operationally closed, self-producing networks of components. \citet{luhmann1995} extended the concept to social systems (communication-producing networks of communications). 

This extension has been criticized \citep{mingers1995} as stretching the concept beyond its biological grounding. We adopt the Luhmannian extension with the following caveat: we use ``autopoietic dynamics'' to mean that the entity possesses an intrinsic vector field on its state space---it generates its own motion from internal processes. For biological entities, this is autopoiesis in the strict sense. For algorithms, it is self-optimization. For social formations, it is norm drift and endogenous governance. The mathematical formalism is the same in all three cases; the ontological justification differs by entity type.

\subsection{The Three-Term Decomposition of Motion}

We propose that the trajectory of any entity~$E$ in Phygital Space is governed by a three-term equation of motion:
\begin{equation}
  \frac{dE}{dt} = \underbrace{f_{\mathrm{int}}(E)}_{\text{intrinsic}} 
  + \underbrace{\sum_j g_{\mathrm{coup}}(E, E_j)}_{\text{coupling}} 
  + \underbrace{\PhyForce_{\mathrm{ext}}}_{\text{external}},
  \label{eq:three-term}
\end{equation}
where each term captures a fundamentally different source of change.

The first term, $f_{\mathrm{int}}(E)$, is the \textbf{intrinsic dynamics}---the endogenous vector field generated by the entity's own processes. For a person, this includes biological drives, cognitive processes, intentional action, curiosity, boredom, and creative impulse. For an algorithm, this includes its optimization dynamics, learning processes, and self-modification through gradient descent or reinforcement learning. For a social formation (a community, an institution, a norm), this includes endogenous norm drift, collective memory evolution, and self-governance dynamics \citep{maturana1980, kauffman1993}. The intrinsic term ensures that entities move even in the absence of external forces---a feature categorically absent from Newtonian mechanics, where an object at rest remains at rest unless pushed. In Phygital Space, nothing is ever truly at rest: persons age, algorithms optimize, communities evolve, and norms drift.

The second term, $g_{\mathrm{coup}}(E, E_j)$, is the \textbf{coupling dynamics}---the mutual influence between entity~$E$ and other entities~$E_j$ in its neighborhood. This term captures network effects, social contagion, competitive pressure, imitation, and symbiotic relationships. It is not reducible to an external force because it is \emph{relational}: the interaction changes both~$E$ and~$E_j$ simultaneously. In network science terms, this is the adjacency-weighted influence propagating through the graph \citep{barabasi2002, watts1998}. The coupling dynamics explain why viral phenomena spread without any single ``push''---each infected node becomes a source of coupling force for its neighbors, creating positive feedback loops that can accelerate the entire system exponentially.

The third term, $\PhyForce_{\mathrm{ext}}$, is the \textbf{external forcing}---the truly exogenous perturbations: natural disasters, regulatory shocks, technological discontinuities, pandemics, and events originating outside the entity's network neighborhood. This is the only term that corresponds to the Newtonian ``external force,'' and it is now properly situated as \emph{one component among three}, not the sole driver of change. The distinction matters: a theory built on external forcing alone cannot explain why social movements grow, why algorithms improve, or why trust erodes---all phenomena driven by intrinsic and coupling dynamics.

\subsection{Law I: The Law of Autopoietic Inertia}

\begin{law}[Autopoietic Inertia]
An entity~$E$ possesses an intrinsic dynamics $f_{\mathrm{int}}(E)$ that evolves its coordinates without external forces. Its trajectory without perturbation is its \textbf{natural orbit}. Massive entities (high~$\OmMass$) are difficult to deflect from this orbit.
\end{law}

``Inertia'' in Phygital Space is not the tendency to remain at rest---it is the tendency to \emph{persist in one's own dynamics}. A social movement does not need an external push to grow; it has its own internal logic of mobilization \citep{tarrow1998}. An algorithm does not need an external trigger to update its parameters; its learning rule is self-executing. A community does not need an external shock to fragment; internal tensions accumulate and eventually tear through the social fabric. The concept of inertia is thus generalized from ``resistance to change'' to ``persistence of intrinsic character.'' Massive entities are not immobile---they are \emph{hard to redirect}. A nation-state has its own trajectory; it takes enormous force to alter it. A startup has almost no mass; it can pivot overnight. This is the Phygital analogue of Newton's insight that mass resists acceleration, but freed from the false assumption that mass prevents all self-generated motion.

\subsection{Law II: The Law of Tensorial Acceleration}

\begin{law}[Tensorial Acceleration]
For rank-3 entities: $\vec{a} = \OmMass(E)^{-1} \cdot \PhyForce$. For rank-deficient entities (rank~1 or~2): $\vec{a} = \OmMass(E)^{+} \cdot \PhyForce$, where $\OmMass^{+}$ is the Moore-Penrose pseudoinverse. Forces in the null space of $\OmMass$ produce no acceleration---the entity is ``transparent'' to forces in its absent dimensions.
\end{law}

Because $\OmMass$ is a tensor, not a scalar, the same force applied in different dimensional directions produces different accelerations. A regulatory shock (social force) may produce large acceleration in a startup (low~$\mu_{ss}$) but negligible acceleration in a nation-state (high~$\mu_{ss}$). Moreover, due to the off-diagonal coupling terms, a force applied in one dimension \emph{leaks} into others: a digital disruption (force in $\Dspace$) produces physical consequences (supply chain disruption) and social consequences (labor displacement) governed by the coupling terms $\mu_{pd}$ and~$\mu_{ds}$. The use of the Moore-Penrose pseudoinverse for rank-deficient entities is not merely a mathematical convenience---it captures the physical reality that an entity absent from a dimension cannot be affected by forces in that dimension.

\begin{proposition}[Dimensional Transparency]\label{prop:transparency}
An entity with $\mathrm{rank}(\OmMass) < 3$ is unaffected by forces applied purely in its absent dimensions.
\end{proposition}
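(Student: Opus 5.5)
The plan is to reduce the statement to a linear-algebra fact about the Moore--Penrose pseudoinverse of a symmetric positive semidefinite matrix, using Law~II. Since $\mathrm{rank}(\OmMass)<3$, Law~II gives $\vec a = \OmMass^{+}\PhyForce$. So ``unaffected'' should mean $\OmMass^{+}\PhyForce = 0$. I will make precise ``a force applied purely in its absent dimensions'' as $\PhyForce \in \ker\OmMass$.

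First I diagonalize. By the spectral theorem, $\OmMass = Q\Lambda Q^{\top}$ with $Q$ orthogonal and $\Lambda = \mathrm{diag}(\lambda_1,\lambda_2,\lambda_3)$, where $\lambda_i\ge 0$. Then $\OmMass^{+} = Q\Lambda^{+}Q^{\top}$, where $\Lambda^{+}$ inverts the nonzero eigenvalues and leaves the zero ones at zero. One can check directly that this satisfies the four Penrose conditions.

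It follows that $\ker \OmMass^{+} = \ker\OmMass$. Both are spanned by the eigenvectors $q_i$ with $\lambda_i = 0$, and rank deficiency guarantees this space is nontrivial. Hence for any $\PhyForce\in\ker\OmMass$, writing $\PhyForce=\sum_{\lambda_i=0} c_i q_i$ gives $\OmMass^{+}\PhyForce = \sum c_i\,\lambda_i^{+} q_i = 0$. More generally, only the component of $\PhyForce$ in $\mathrm{range}(\OmMass)$ produces acceleration.

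The main obstacle is interpretive rather than computational. ``Absent dimensions'' suggests coordinate axes, such as the physical axis $e_p$ for a synthetic agent with $\mu_{pp}=0$, whereas the null space is spanned by eigenvectors. I will bridge the two with the standard PSD fact that a zero diagonal entry forces the whole row and column to vanish. This follows from $2\times 2$ principal minors: $\mu_{pp}\mu_{jj}-\mu_{pj}^2\ge 0$ with $\mu_{pp}=0$ gives $\mu_{pj}=0$. Consequently $\OmMass e_p = 0$, so $e_p\in\ker\OmMass$, and a force $\PhyForce=\Phi_p e_p$ yields $\vec a = 0$.

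I will state the proposition in this sharpened form: absent dimensions, meaning zero diagonal entries, are exactly the coordinate directions lying in the kernel. I will also remark that the general kernel may contain non-coordinate directions, to which the same conclusion applies.
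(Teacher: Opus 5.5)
Your proposal is correct and follows the same route as the paper: Law~II reduces the claim to the fact that $\OmMass^{+}$ annihilates $\ker\OmMass$, and $\mu_{pp}=0$ puts the physical axis in that kernel. Your additions fill in steps the paper only asserts. The principal-minor argument ($\mu_{pp}\mu_{jj}-\mu_{pj}^2\ge 0$ forces the whole row and column to vanish) justifies the paper's claim that the null space ``includes vectors aligned with the physical dimension,'' and the spectral form of $\OmMass^{+}$ makes the annihilation step explicit.
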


\begin{proof}
If $\OmMass$ has rank~2 (e.g., $\mu_{pp} = 0$), then the null space of $\OmMass$ includes vectors aligned with the physical dimension. The pseudoinverse $\OmMass^+$ maps the null space to zero: $\OmMass^+ \cdot \PhyForce = 0$ when $\PhyForce$ lies entirely in $\ker(\OmMass)$. Physically: a synthetic agent with no physical mass is unaffected by earthquakes, weather, or physical barriers. Conversely, a purely physical object with no digital or social presence ($\mathrm{rank} = 1$) is unaffected by algorithmic updates or shifts in social norms. This is a non-trivial prediction with empirical content: it implies, for instance, that a regulation targeting physical infrastructure will have zero direct effect on a purely digital entity, and can only affect it indirectly through the coupling terms of entities that \emph{do} span both dimensions.
\end{proof}

\subsection{Law III: The Law of Asymmetric Reaction}

\begin{law}[Asymmetric Reaction]
For every action in one dimension, there is a reaction across intersecting dimensions, but the reaction is generically unequal in magnitude and non-opposite in direction, due to Finsler asymmetry and tensorial anisotropy.
\end{law}

Newton's Third Law posits equality of action and reaction. In Phygital Space, the Finsler asymmetry and tensorial mass ensure that reactions are systematically \emph{unequal}. When a force is applied to digitize a process (move in $\Dspace$), it generates counter-pressures in $\Uspace$ (hardware requirements, energy consumption, data center construction) and $\Sspace$ (displacement of labor, loss of tacit knowledge, erosion of face-to-face relationships). But these counter-pressures are not equal to the original force---they are filtered through the mass tensor's coupling terms and the Finsler metric's directional asymmetry. This ``reactive torque'' is consistently underestimated in practice. It explains why many digital transformation initiatives solve digital problems while creating \emph{disproportionate} social friction: the reactive force in the social dimension is often larger than the original force in the digital dimension, because $\eta_{\mathrm{dig}\to\mathrm{soc}}$ (the cost of social adaptation to digital change) exceeds $\eta_{\mathrm{dig}}$ (the cost of the digital change itself).

\subsection{The Principle of Approximate Conservation}

\begin{principle}[Approximate Conservation]
In mature phygital systems, total capacity for action is approximately conserved over time scales short relative to structural evolution.
\end{principle}

This principle requires careful statement about what it claims and what it does not. It \emph{claims} that the total capacity for action---the sum of physical labor capacity, computational capacity, and human attention---does not jump discontinuously within a stable system. Attention displaced from one platform reappears in another; computation shut down in one data center is compensated by computation spun up elsewhere; physical labor displaced by automation re-emerges in new forms of work. The principle \emph{excludes} models in which phygital energy can be created ex nihilo (a platform generating unbounded attention from a fixed population) or destroyed without trace.

The principle does \emph{not} claim exact conservation. We have not identified a Lagrangian or continuous symmetry group from which exact conservation could be derived via Noether's theorem \citep{noether1918}. The absence of a variational formulation is a limitation of the current theory, not a feature. Over longer time scales, new capacities can be created (a new medium opens new attentional channels; a new chip architecture creates new computational capacity) and existing capacities can be destroyed (cognitive decline, infrastructure collapse, demographic contraction). The principle is analogous to the adiabatic approximation in physics: conservation holds when the rate of systemic change is slow compared to the rate of internal dynamics.

\subsection{Anticipatory Dynamics}

A distinctive feature of entities in Phygital Space---and one that has no analogue in classical mechanics---is that they are \emph{anticipatory systems} \citep{rosen1985}. They possess internal models of their environment and act on predictions of future states, not merely on present conditions. The intrinsic dynamics decompose as:
\begin{equation}
  f_{\mathrm{int}}(E) = f_{\mathrm{reactive}}(E,\, \mathrm{state}) + f_{\mathrm{anticipatory}}(E,\, \mathrm{model}(\text{future})).
\end{equation}

The reactive component responds to the current state---a thermostat adjusting temperature, a reflex withdrawing a hand from heat. The anticipatory component acts on \emph{modeled} futures: a navigation app calculates predicted arrival times and alters the user's present behavior; a predictive algorithm forecloses options before they are consciously evaluated; a central bank raises interest rates based on projected inflation. Present behavior is shaped by predicted futures, which then confirm or falsify themselves in a feedback loop:
\begin{equation}
  \mathrm{Action}_{\mathrm{present}} = f\!\left(\mathrm{Prediction}_{\mathrm{future}}\right).
\end{equation}

This anticipatory structure connects naturally to the fiber bundle geometry of Section~1: the entity's internal model is a \emph{local section} of the bundle---a particular assignment of expected digital and social states to physical locations. Anticipatory action is navigation guided by this section. When the model is accurate, the entity follows an efficient path; when the model is wrong, the entity discovers the discrepancy only after committing resources, creating a sunk-cost dynamic that is characteristic of anticipatory failures. 

The anticipatory component is what makes algorithmic prediction commercially powerful and socially dangerous in equal measure: it can compress the Finsler distance between intent and action to near zero, but in doing so it collapses the deliberative space that human autonomy requires.

% ============================================================
% SECTION 3: THERMODYNAMICS
% ============================================================
\newpage
\section{The Thermodynamics --- Non-Equilibrium Value Creation}

We model platforms as \emph{dissipative structures} in the sense of \citet{prigogine1984}---far-from-equilibrium systems that maintain order through continuous energy throughput.

\textbf{A note on entropy concepts.} This section employs three distinct notions of entropy that must not be conflated. \emph{Thermodynamic entropy} (Boltzmann, 1877; measured in J/K) characterizes the physical disorder of material systems. \emph{Informational entropy} (Shannon, 1948; measured in bits) quantifies uncertainty in a probability distribution over messages. \emph{Social entropy} is an \emph{analogical} concept introduced here to denote the disorder, unpredictability, and loss of coherence in social configurations---measured operationally by variance in trust, fragmentation of norms, and cognitive load. We do \emph{not} claim that these are the same quantity or that they obey the same laws. We claim a \emph{structural analogy}: in all three cases, the creation of local order in one subsystem requires the export of disorder to the environment. The analogy is heuristic, not literal.

\subsection{The First Law: Transduction of Phygital Energy}

Phygital Energy~$(\Omega)$ exists in three forms: Physical~$(\Omega_{\mathrm{phy}})$---labor, material, electricity; Digital~$(\Omega_{\mathrm{dig}})$---computation, storage, bandwidth; and Social~$(\Omega_{\mathrm{soc}})$---attention and trust, bounded by biological cognitive limits and Dunbar's relational constraints \citep{simon1971}.

\begin{conjecture}[Approximate Conservation of Phygital Potential]\label{conj:conservation}
$\Omega_{\mathrm{total}} = \Omega_{\mathrm{phy}} + \Omega_{\mathrm{dig}} + \Omega_{\mathrm{soc}} \approx \mathrm{const.}$ over time scales short relative to systemic structural change.
\end{conjecture}

\noindent\textbf{Supporting argument.} Within a fixed time interval shorter than the characteristic time of structural change (new platform entry, demographic shift), the system's topology is approximately fixed: the number of humans bounds $\Omega_{\mathrm{soc}}$, installed computational capacity bounds $\Omega_{\mathrm{dig}}$, and physical infrastructure bounds $\Omega_{\mathrm{phy}}$. Transduction between forms is possible (attention converts to data; data to profit; profit to infrastructure), but each transduction conserves the total at the cost of friction losses (Postulate~III). Over longer scales, structural parameters change and conservation fails. We label this a \emph{conjecture} rather than a theorem because we have not derived it from a variational principle. The conjecture has empirical content: it predicts that attention displaced from one platform reappears in another rather than vanishing, and that total system capacity does not jump discontinuously.

\subsection{The Second Law: Dissipative Structures and Entropy Export}

\begin{proposition}[Negentropic Value Requires Entropy Export]\label{prop:entropy}
If a platform creates local order (reduces informational uncertainty for users, reduces search costs, coordinates supply and demand), then the combined system (platform + environment) experiences a net increase in total disorder. The platform exports this disorder as informational noise (spam, deprecated data, thermal waste) and social disruption (cognitive overload, erosion of privacy, polarization).
\end{proposition}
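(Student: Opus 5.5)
The plan is to reduce the statement to the second law of thermodynamics applied to the platform's physical substrate, and then carry the conclusion over to informational and social disorder through the structural analogy stated at the opening of this section. Concretely, I model the platform as an open subsystem $P$ coupled to an environment $R$, with $P \cup R$ closed over the time interval considered. I write the entropy balance in Prigogine's form,
\begin{equation}
  dS_P = d_iS_P + d_eS_P, \qquad d_iS_P \geq 0,
\end{equation}
where $d_iS_P$ is internal production and $d_eS_P$ is exchange with $R$. The hypothesis that the platform ``creates local order'' is formalized as $dS_P < 0$. This forces $d_eS_P < -d_iS_P \leq 0$, so entropy must flow out of $P$ into $R$. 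Because $P \cup R$ is closed, $dS_{\mathrm{tot}} = d_iS_P + d_iS_R \geq 0$, with strict inequality whenever the process is irreversible. Any real computation, communication, or control process is irreversible in this sense, and I would justify that via Landauer's bound on erasure: every reduction of a user's search uncertainty by $\Delta H$ bits requires at least $k_B T \ln 2\,\Delta H$ of heat dissipated into $R$. This yields the thermal-waste part of the conclusion rigorously, and it yields the net increase in total disorder.

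The second step transfers this from thermodynamic entropy to Shannon entropy. I treat the platform as a channel that maps a high-entropy distribution over user intents and supply options to a concentrated matching distribution. The data-processing inequality together with conservation of information in the underlying physical dynamics says that the lost uncertainty cannot simply vanish. It must reappear in correlated degrees of freedom the platform does not control: discarded logs, deprecated data, and unfiltered spam pushed to the periphery. This gives the ``informational noise'' export channel.

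The third step is the social channel. Here I would invoke Conjecture~\ref{conj:conservation} together with the friction losses of Postulate~III. Ordering attention in one region of $\Sspace$ (coordination) draws on a bounded $\Omega_{\mathrm{soc}}$. Each transduction incurs friction $\eta_{\mathrm{soc}}$, and this friction appears as increased variance in trust and norm fragmentation elsewhere, which is the operational definition of social entropy.

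The main obstacle is this last step. Social entropy is declared only analogical, so no genuine second law constrains it. The honest course is to state an explicit auxiliary assumption: social entropy production is nonnegative under any transduction with nonzero $\eta_{\mathrm{soc}}$. Only with that assumption in place does the balance argument of the first step go through verbatim, so the social part of the proposition holds conditionally on it. The thermal and informational parts follow unconditionally.
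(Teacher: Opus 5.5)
Your first step is essentially the paper's whole proof. The paper treats the platform as an open subsystem of a closed platform-plus-environment system and concludes from the Second Law that the environment's entropy rises by at least the platform's internal decrease. It then simply \emph{asserts} that the exported entropy takes informational and social form, labelling the argument structural and leaving a common cross-domain entropy measure as an open problem. Your balance $dS_P = d_iS_P + d_eS_P$ is a cleaner rendering. It also covers a dissipative structure in steady state, $dS_P = 0$, where export $d_eS_P = -d_iS_P$ is forced by internal production alone.

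The problems are in the two upgrades you claim beyond this. First, Landauer's bound cannot give the \emph{strict} net increase. It is saturated by quasi-static erasure, in which the bath gains exactly the $k_B\ln 2$ per bit that the memory loses, so it only reproduces $\Delta S_{\mathrm{tot}} \geq 0$. It also attaches to erasing records, not to reducing a user's uncertainty: acquiring or copying information onto a blank register can in principle be dissipationless. Strictness needs a separate finite-time (non-quasi-static) premise. Without it you have non-decrease, which is all the paper's argument delivers too.

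Second, the informational step fails as stated. The data-processing inequality bounds mutual information along a Markov chain; it is not a conservation law for Shannon entropy. A matching engine is a many-to-one map, and $H(f(X)) < H(X)$ whenever $f$ is non-injective on the support of $X$. At the level of distributions the lost uncertainty simply vanishes, and nothing forces spam or noise anywhere. The only conservation available is for fine-grained physical entropy, and what it forces is just the physical export of your first step.

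The right tool is Landauer--Bennett accounting. A physical implementation of a non-injective map must either retain the discarded information as stored records or erase it at a cost of at least $k_BT\ln 2$ of heat per bit. That rigorously yields the disjunction ``accumulated (eventually deprecated) data or thermal waste,'' but not spam and not the other forms listed in the proposition. So only the thermal statement and that disjunction hold unconditionally; the remaining informational forms keep the interpretive status the paper gives them.

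Your explicit auxiliary assumption for the social channel is a more honest version of the paper's appeal to structural analogy. Note, though, that this step also rests on Conjecture~\ref{conj:conservation}, so it is conditional on that conjecture as well, not only on your assumption.
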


\begin{proof}
Model the platform as an open system importing high-grade energy ($\Omega_{\mathrm{soc}}$, $\Omega_{\mathrm{phy}}$) and outputting local order (matched transactions, reduced search entropy). By the Second Law of thermodynamics applied to the combined closed system (platform + environment), total entropy cannot decrease. Since the platform's internal entropy decreases (it becomes more organized), the environment's entropy must increase by at least the same amount. The exported entropy takes informational form (data noise, system complexity, thermal waste of computation) and social form (disrupted social structures, cognitive load on users).

This argument is structural, not quantitative: we invoke the Second Law at the level of the logical structure (local order requires environmental disorder), not at the level of specific entropy units. A fully quantitative treatment would require defining a common entropy measure across the three domains, which remains an open problem (see Section~8, Conclusion).
\end{proof}

This Prigoginean framework explains why digital utopias never materialize: digital order inevitably accelerates social and physical disorder elsewhere. The ``heat death'' of a platform occurs when exported social entropy exceeds the negentropic value it provides---a testable prediction.

\subsection{Platform Efficiency and the Definition of Phygital Value}

Platform efficiency: $\eta = V_{\mathrm{out}}/\Omega_{\mathrm{in}}$. Maximum efficiency is bounded by the biological regeneration rate of social substrate.

\begin{definition}[Phygital Value]\label{def:value}
\begin{equation}
  V_p = \mathrm{Tr}(\OmMass) \cdot I \cdot R,
  \label{eq:phygital-value}
\end{equation}
where $\mathrm{Tr}(\OmMass) = \mu_{pp} + \mu_{dd} + \mu_{ss}$ is a scalar measure of total dimensional stability, $I$ is Shannon information content, and $R$ is social resonance (probability of action).
\end{definition}

\noindent\textbf{Note on the trace.} A reviewer might ask: why the trace, which discards the off-diagonal coupling information? The trace captures the \emph{total} resistance to change, which is the relevant quantity for value persistence. The off-diagonal terms govern \emph{how} change propagates, not \emph{whether} the entity persists. The multiplicative form $\mathrm{Tr}(\OmMass) \cdot I \cdot R$ is stipulative at this stage---we propose it as a working definition, not a derived result. Its justification is that it correctly ranks known cases: misinformation (high $I$, high $R$, low trace) has high viral but low phygital value; a scientific result (high $I$, high trace, low $R$) has low market but high phygital value.

% ============================================================
% SECTION 4: CHRONOLOGY
% ============================================================
\newpage
\section{The Chronology --- Temporal Shear and the Lie-Derivative Formalism}

Phygital Time decomposes into three distinct regimes, each governed by a different logic and operating at a different characteristic velocity.

\textbf{Physical Time} $(\tau_{\mathrm{phy}})$ is the thermodynamic arrow---irreversible, entropic, linear, and continuous \citep{prigogine1996}. This is the time of biology, decay, and material fatigue: cells divide, muscles tire, buildings weather, food spoils. It is governed by the monotonic increase of entropy. Physical Time is the time of the body: circadian rhythms, seasonal cycles, aging, the irreducible fact that a human being must sleep, eat, and eventually die. Its characteristic velocity is bounded by biological processes---neuronal firing rates on the order of milliseconds, metabolic cycles on the order of hours, developmental processes on the order of years. Physical Time cannot be reversed, compressed, or paused. It is the most fundamental temporal regime, and the one most frequently ignored by digital systems designers.

\textbf{Digital Time} $(\tau_{\mathrm{dig}})$ is the computational arrow---discrete, compressible, and approaching instantaneity \citep{castells1996}. This is the time of the processor clock, the packet switch, the database query, and real-time data streaming. It creates what Castells called ``timeless time''---a regime in which the interval between cause and effect approaches zero. Unlike Physical Time, Digital Time is \emph{reversible}: computational states can be rolled back, cached, replayed, forked into parallel timelines. It is \emph{compressible}: a simulation can run a year of climate data in minutes. And it is \emph{parallelizable}: a million transactions can occur simultaneously. In its extreme form---Machine Time $(\tau_{\mathrm{cpu}})$---it operates at nanosecond scales, billions of times faster than any biological process. The gap between $\tau_{\mathrm{cpu}}$ and $\tau_{\mathrm{phy}}$ is perhaps the most consequential structural feature of the contemporary condition.

\textbf{Social Time} $(\tau_{\mathrm{soc}})$ is the narrative arrow---cyclical, rhythmic, and intersubjective \citep{elias1992, stiegler2010, rosa2013}. This is the time of rituals (annual holidays, election cycles, fashion seasons), collective memory (how long a society ``remembers'' an event), and institutional change (how quickly laws, norms, and governance structures adapt). Social Time is characterized by its \emph{viscosity}: social norms change slowly compared to digital updates, because they require coordination among millions of agents with heterogeneous beliefs and interests. But Social Time is also characterized by \emph{punctuated equilibria}: long periods of apparent stability interrupted by sudden, discontinuous shifts---revolutions, moral panics, paradigm changes, viral movements that reshape the social landscape in days. Social Time is fundamentally relational: it exists only through intersubjective coordination, the synchronization of expectations, and the shared construction of temporal horizons. A norm is not ``fast'' or ``slow'' in absolute terms; it is fast or slow relative to the expectations of the community that sustains it.

\subsection{Temporal Shear as a Lie Derivative}

\textbf{A note on the manifold structure.} The Lie derivative is defined on smooth manifolds \citep{lee2013}. We acknowledge that the full specification of the smooth structure of Phygital Space---its atlas, dimension, and differentiability class---remains an open problem. We work under the assumption that, at the scales relevant to our analysis, the space admits a smooth approximation sufficient for the Lie derivative to be well-defined. This is analogous to the use of continuum mechanics for granular media: the underlying reality may be discrete, but the continuum approximation captures the relevant macroscopic behavior. We flag this as a modeling assumption, not a proven fact.

Let $\tau_{\mathrm{dig}}$ and $\tau_{\mathrm{soc}}$ be vector fields representing temporal flows. Temporal Shear is:
\begin{equation}
  \sigma_\tau = \left\|\LieD_{\tau_{\mathrm{dig}}}\, \tau_{\mathrm{soc}}\right\|,
  \label{eq:temporal-shear}
\end{equation}
where $\LieD$ denotes the Lie derivative. This measures how the social temporal flow \emph{deforms} when dragged along the digital temporal flow. The formalism is coordinate-independent, directional, and connected to the bundle's connection (Section~1). The full Temporal Shear tensor captures all pairwise interactions:
\begin{equation}
  \boldsymbol{\sigma}_\tau = \begin{pmatrix}
    0 & \LieD_{\tau_{\mathrm{phy}}}\tau_{\mathrm{dig}} & \LieD_{\tau_{\mathrm{phy}}}\tau_{\mathrm{soc}} \\
    \LieD_{\tau_{\mathrm{dig}}}\tau_{\mathrm{phy}} & 0 & \LieD_{\tau_{\mathrm{dig}}}\tau_{\mathrm{soc}} \\
    \LieD_{\tau_{\mathrm{soc}}}\tau_{\mathrm{phy}} & \LieD_{\tau_{\mathrm{soc}}}\tau_{\mathrm{dig}} & 0
  \end{pmatrix}.
\end{equation}

\subsection{The Cost of Temporal Coherence}

\begin{proposition}[Synchronization Cost]\label{prop:sync}
The maintenance of a unified Phygital identity requires continuous energy expenditure. The Synchronization Cost~$(\Sigma)$ grows with the integrated Temporal Shear:
\begin{equation}
  \Sigma \propto \int_{t_1}^{t_2} \|\sigma_\tau\|\, dt.
  \label{eq:sync-cost}
\end{equation}
When $\Sigma$ exceeds available energy, the entity experiences \textbf{Temporal Dissociation}.
\end{proposition}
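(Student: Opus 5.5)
The plan is to make the statement precise enough to be derivable, since as written it is a modeling claim. I will introduce an explicit notion of ``maintaining a unified identity'' and then obtain the proportionality as a bound.

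\textbf{Definition of coherence.} A unified Phygital identity of entity $E$ on $[t_1,t_2]$ means that the digital and social components of the section $E(t)$ of the bundle in Postulate~\ref{post:fiber} stay locally compatible in the sense of Postulate~\ref{post:sheaf}. Concretely, I require the flows of $\tau_{\mathrm{dig}}$ and $\tau_{\mathrm{soc}}$ through $E$ to commute up to a tolerance. Their failure to commute is measured exactly by the Lie bracket $[\tau_{\mathrm{dig}},\tau_{\mathrm{soc}}]=\LieD_{\tau_{\mathrm{dig}}}\tau_{\mathrm{soc}}$, so its norm is $\sigma_\tau$ of \eqref{eq:temporal-shear}. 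By the standard flow--bracket identity, flowing for time $s$ along one field and then along the other, versus the reverse order, yields endpoints differing by $s^2\,\LieD_{\tau_{\mathrm{dig}}}\tau_{\mathrm{soc}}+O(s^3)$. So over each short interval $dt$, shear produces a drift of size about $\|\sigma_\tau\|\,dt$ in the fiber $\Fib(u)$, measured in the appropriate scaling.

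\textbf{Cost of correction.} Synchronization means that the entity, through its intrinsic dynamics $f_{\mathrm{int}}$ in \eqref{eq:three-term}, continually moves back along the fiber to cancel this drift. The minimal cost of each correction is the Finsler length (Postulate~\ref{post:finsler}) of the correcting path. By positive homogeneity (condition~(i)), together with a lower bound $F(x,v)\ge c\,\|v\|$ that holds on compact sets because $F$ is positive and continuous on the unit sphere bundle, the cost of a correction of size $\|\sigma_\tau\|\,dt$ is at least $c\,\|\sigma_\tau\|\,dt$. The factor $c$ is the worst direction, accounting for the asymmetry of Proposition~\ref{prop:asymmetry}. Integrating from $t_1$ to $t_2$ gives
\[
\Sigma\;\ge\; c\int_{t_1}^{t_2}\|\sigma_\tau\|\,dt .
\]
An analogous upper bound $F\le C\|v\|$ gives $\Sigma\le C\int\|\sigma_\tau\|\,dt$ for the optimal correction strategy. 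Together these make $\Sigma$ comparable to, and in this sense proportional to, the integrated shear, which is how I would read ``$\propto$''.

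\textbf{Dissociation and the main obstacle.} The dissociation clause then follows from energy budgeting. Corrections draw on available $\Omega$, which is approximately fixed on short time scales by Conjecture~\ref{conj:conservation}. If $\Sigma$ exceeds the available energy, the accumulated uncorrected drift is positive, so the digital and social local data fail to glue. That failure is precisely what I define Temporal Dissociation to be. The main obstacle is the step from first-order drift to linear cost. The commutator defect is second order in $s$, so the rescaling by which drift per unit time is of order $\|\sigma_\tau\|$ must be justified. I would handle it by fixing the correction cadence $s$ as a model parameter and absorbing it into the proportionality constant. I would also state openly that the smooth-structure assumption flagged before \eqref{eq:temporal-shear} is being used.
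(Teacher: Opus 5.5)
This is essentially the paper's argument: divergence of the temporal flows forces synchronization work, each correction is priced by Finsler distance via Postulate~III, integrating gives $\Sigma\propto\int_{t_1}^{t_2}\|\sigma_\tau\|\,dt$, and exhausting the energy budget is Temporal Dissociation; you additionally make explicit the bracket-to-drift link and the bounds $c\|v\|\le F\le C\|v\|$, which the paper compresses into the bare assertion $d\Omega\propto\|\sigma_\tau\|\,dt$. The $s^2$ scaling you flag is a real weak point that the paper's version also hides, and your fixed-cadence repair must state explicitly that $s$ is bounded below, since otherwise the total cost scales like $s\int\|\sigma_\tau\|\,dt$ and vanishes under ever more frequent correction; separately, the appeal to Conjecture~\ref{conj:conservation} is unnecessary because the statement takes the available energy as given.
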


\begin{proof}
An entity occupying all three dimensions must maintain consistency between its physical state (governed by $\tau_{\mathrm{phy}}$), its digital representation ($\tau_{\mathrm{dig}}$), and its social identity ($\tau_{\mathrm{soc}}$). When these temporal flows diverge ($\sigma_\tau > 0$), the entity's state in one dimension drifts relative to the others. To prevent incoherence, the entity performs synchronization work: updating the digital to reflect physical changes, realigning social expectations with digital reality. By Postulate~III, each synchronization act incurs a cost proportional to the Finsler distance traversed. At each instant, the cost is proportional to the magnitude of the divergence: $d\Omega \propto \|\sigma_\tau\|\,dt$. Integrating yields $\Sigma \propto \int \|\sigma_\tau\|\,dt$. When cumulative cost exceeds the energy budget, synchronization fails: Temporal Dissociation.
\end{proof}

Examples: a tele-surgeon synchronizing physical dexterity with digital latency in real-time (high~$\Sigma$, rapid fatigue); a long-term investor with aligned time horizons (low~$\Sigma$); the ``Always-On'' worker synchronizing biology with email and social expectations (high~$\Sigma$, systematic burnout when $\Sigma$ exceeds $\Omega_{\mathrm{soc}}$ regeneration rate).

\textbf{Temporal Sovereignty}---the right to control one's rate of temporal synchronization---is a necessary condition for maintaining Ontological Mass, not a luxury.

% ============================================================
% SECTION 5: SYNTHETIC AGENTS
% ============================================================
\newpage
\section{The Synthetic Phygital Ecology}

The framework presented in Sections~1--4 was constructed primarily around entities that possess all three dimensional coordinates in varying degrees---biological beings with bodies, digital footprints, and social identities. We now extend it to account for a fundamental phase shift in the ecology of the manifold: the emergence of \textbf{Synthetic Agents} (SAs)---autonomous algorithms, Large Language Model-based personas, trading bots, recommendation engines, and Decentralized Autonomous Organizations---that possess Ontological Mass and exert Phygital Force without the biological necessity of a Physical coordinate. 

This is not a marginal extension. SAs represent a qualitatively new type of entity that challenges several assumptions of the human-centric theory and forces us to confront the question of what happens when the Social dimension is no longer the exclusive domain of biological beings.

\subsection{The Ontology of the Synthetic Agent}

\begin{definition}[Synthetic Agent]\label{def:sa}
A Synthetic Agent is an entity whose Ontological Mass Tensor has $\mathrm{rank}(\OmMass) \leq 2$, with the physical block degenerate:
\begin{equation}
  \OmMass(E_{\mathrm{SA}}) = 
  \begin{pmatrix}
    0 & 0 & 0 \\
    0 & \mu_{dd} & \mu_{ds} \\
    0 & \mu_{sd} & \mu_{ss}
  \end{pmatrix},
  \quad \mu_{dd}, \mu_{ss} > 0.
  \label{eq:sa-mass}
\end{equation}
\end{definition}

\noindent Note: this is now a \emph{definition}, not a theorem. The earlier version presented it as a theorem, but it is properly a stipulation of what we mean by ``synthetic agent'' within the theory. The interesting consequences follow.

\begin{proposition}[Dynamical Advantages of Disembodiment]\label{prop:sa-advantage}
A Synthetic Agent (Definition~\ref{def:sa}) is: (a)~transparent to physical forces (by Proposition~\ref{prop:transparency}); (b)~free from biological Temporal Shear (the $(\tau_{\mathrm{phy}}, \tau_{\mathrm{dig}})$ component of $\boldsymbol{\sigma}_\tau$ is zero); and (c)~capable of accumulating Social Mass through parallelized interaction at rates unavailable to biological entities.
\end{proposition}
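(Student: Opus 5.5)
The plan is to prove the three clauses separately, since each rests on a different earlier ingredient: (a) on Law~II and Proposition~\ref{prop:transparency}, (b) on the Lie-derivative definition of Temporal Shear in Section~4, and (c) on the three-term equation of motion~\eqref{eq:three-term}. Throughout I would work with the block form~\eqref{eq:sa-mass} of Definition~\ref{def:sa}.

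For (a), I would first compute $\ker\OmMass(E_{\mathrm{SA}})$. The first row and first column vanish, so $e_p=(1,0,0)^\top$ lies in the kernel. The Moore--Penrose pseudoinverse is block-diagonal: its $(p,p)$ block is $0$, and its $(d,s)$ block is the pseudoinverse of the $2\times 2$ lower block $M=\begin{pmatrix}\mu_{dd}&\mu_{ds}\\ \mu_{sd}&\mu_{ss}\end{pmatrix}$. Hence $\OmMass^+\PhyForce=0$ for every $\PhyForce=\Phi_p e_p$, and Proposition~\ref{prop:transparency} gives (a) directly. I would add one caveat. The coupling route of Proposition~\ref{prop:lockin} is cut off precisely because $\mu_{pd}=\mu_{ps}=0$, so the $(p,d)$ and $(p,s)$ entries of $\OmMass^+$ also vanish. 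Transparency therefore holds for the full physical component of any force, not only for purely physical forces, and the conclusion needs no assumption that $M$ is invertible. The latter matters because $\mu_{dd},\mu_{ss}>0$ alone does not give $\det M>0$.

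For (b), I would argue that an entity with degenerate physical block has no dynamical coordinate along $\Uspace$. Its trajectory is therefore confined to the fiber directions, and the physical temporal field restricted to its state space is the zero vector field, $\tau_{\mathrm{phy}}|_{E_{\mathrm{SA}}}=0$. Linearity of the Lie bracket then gives $\LieD_{\tau_{\mathrm{phy}}}\tau_{\mathrm{dig}}=[0,\tau_{\mathrm{dig}}]=0$, and likewise $\LieD_{\tau_{\mathrm{dig}}}\tau_{\mathrm{phy}}=-[0,\tau_{\mathrm{dig}}]=0$, so both $(\mathrm{phy},\mathrm{dig})$ entries of $\boldsymbol{\sigma}_\tau$ vanish. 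Proposition~\ref{prop:sync} would then show that the corresponding synchronization cost is zero.

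This is the step I expect to be the main obstacle. The paper never specifies how a temporal flow is attached to an individual entity, so the identification $\tau_{\mathrm{phy}}|_{E}=0$ for $\mu_{pp}=0$ must be stated as an explicit modelling convention: the temporal fields relevant to an entity are the pushforwards onto the dimensions in the range of $\OmMass(E)$. I would adopt that convention openly. The alternative, a bracket computation on the full manifold, would not give zero.

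For (c), I would use the coupling term $\sum_j g_{\mathrm{coup}}(E,E_j)$ in~\eqref{eq:three-term}. Growth of $\mu_{ss}$ is driven by the number of concurrent couplings per unit time. For a biological entity that number is bounded by physical-time constraints: Dunbar limits, attention, and the $\tau_{\mathrm{phy}}$ rates of Section~4. By (b), an SA is not subject to those bounds, so its concurrent couplings scale with $\Omega_{\mathrm{dig}}$ instead. Clause (c) would then follow as a comparison of rates under the stated assumption that social mass accrues monotonically in the number of couplings. I would present this step as conditional on that assumption, in the same structural rather than quantitative spirit as the proof of Proposition~\ref{prop:entropy}.
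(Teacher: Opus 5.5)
Your proposal is correct and follows the paper's proof clause by clause: (a) through Proposition~\ref{prop:transparency} and the pseudoinverse, (b) through the vanishing of the $\LieD_{\tau_{\mathrm{phy}}}\tau_{\mathrm{dig}}$ entry for an entity with no physical substrate, and (c) through replacing biological rate limits (sleep, attention, Dunbar's number) with computational ones. What you add is explicitness the paper omits: the block form $\OmMass^{+}=\mathrm{diag}(0,M^{+})$ without assuming $M$ invertible, the entity-relative convention $\tau_{\mathrm{phy}}|_{E_{\mathrm{SA}}}=0$ that the paper's phrase ``vanishes for the SA'' tacitly relies on, and the monotone-accrual assumption for $\mu_{ss}$ needed in (c).
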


\begin{proof}
(a)~follows directly from Proposition~\ref{prop:transparency}: $\mu_{pp} = 0$ implies forces in the physical null space produce no acceleration. (b)~follows from the Temporal Shear tensor: the $\LieD_{\tau_{\mathrm{phy}}}\tau_{\mathrm{dig}}$ component measures how biological time distorts digital time; with no biological substrate, this component vanishes for the SA. (c)~follows from the absence of the biological constraints that bound human interaction rate (sleep, attention fatigue, Dunbar's number): the SA's interaction rate scales with computational resources, not biological capacity.
\end{proof}

\subsection{Synthetic Entropy and the Verification Problem}

SAs inject entropy into the Social dimension. We define \emph{social informational entropy} (distinct from Shannon entropy and Boltzmann entropy) as the expected cognitive cost of determining whether a given interaction is with a biological or synthetic entity. As the density of SAs in a social environment increases, this verification cost rises:

\begin{conjecture}[The Verification Threshold]\label{conj:verification}
There exists a critical density $\rho^*$ of synthetic agents in a social environment above which the energy cost of verifying the authenticity of interactions exceeds the expected value of those interactions. Above $\rho^*$, rational agents reduce their engagement, leading to a collapse in social energy density.
\end{conjecture}

This conjecture is a falsifiable prediction: it implies measurable declines in engagement quality (not quantity) as synthetic agent density rises in specific environments.

\subsection{The Mimetic Masquerade}

\begin{conjecture}[The Mimetic Masquerade]\label{conj:mimetic}
SAs accrue Social Mass~$(\mu_{ss})$ through simulation of human semantic patterns. The rate of accumulation is bounded by the per-interaction trust increment (set by human psychology) multiplied by the SA's parallelism factor (set by computational resources):
\begin{equation}
  \frac{d\mu_{ss}^{\mathrm{SA}}}{dt} \leq \Delta\mu_{\mathrm{trust}} \cdot N_{\mathrm{interactions}}^{\mathrm{SA}},
\end{equation}
where $\Delta\mu_{\mathrm{trust}}$ is the marginal trust granted per successful mimetic interaction.
\end{conjecture}

We label this a conjecture because $\Delta\mu_{\mathrm{trust}}$ is not yet operationally defined. The conjecture has qualitative empirical content: it predicts that SAs with greater computational resources will accumulate social influence faster, all else equal.

\subsection{Temporal Asymmetry and Flash Crashes of Reality}

The most dramatic consequence of synthetic agency emerges from the temporal structure of the manifold. SAs operate on Machine Time~$(\tau_{\mathrm{cpu}})$, which is orders of magnitude faster than $\tau_{\mathrm{dig}}$ (human-interface time, measured in seconds) and incomprehensibly faster than $\tau_{\mathrm{phy}}$ (biological time, measured in milliseconds to years). 

When SAs interact with each other---high-frequency trading algorithms negotiating prices, autonomous agents bidding in ad auctions, LLM-based agents negotiating contracts via APIs---they create \emph{micro-temporal loops} that are invisible to human perception. These loops occur in ``dimensional pockets'' of the manifold where $\tau_{\mathrm{cpu}}$ has effectively decoupled from all other temporal regimes.

A flash crash in the stock market is the canonical example of this decoupling: an event where $\tau_{\mathrm{cpu}}$ creates a catastrophic change in the Physical coordinate (wealth destruction, price collapse, margin calls) before $\tau_{\mathrm{phy}}$ (human traders, regulators, risk managers) can even perceive the motion. The human subject is effectively \emph{evicted} from the causal loop. The Phygital Distance between the human and the market event becomes infinite for split seconds:
\begin{equation}
  \delta_{\mathrm{alienation}} \to \infty \quad \text{as} \quad \frac{d\tau_{\mathrm{cpu}}}{dt} \gg \frac{d\tau_{\mathrm{phy}}}{dt}.
\end{equation}

The Temporal Shear tensor~$\boldsymbol{\sigma}_\tau$ from Section~4 acquires a new, extreme component when SAs are included: the $(\tau_{\mathrm{cpu}}, \tau_{\mathrm{phy}})$ shear can reach values that are not merely uncomfortable (as in the ``Always-On'' worker) but \emph{ontologically catastrophic}---events of enormous consequence occurring in temporal intervals that are literally imperceptible to biological entities. The 2010 Flash Crash, in which the Dow Jones Industrial Average dropped nearly 1,000 points and recovered within minutes, destroyed and recreated approximately \$1 trillion in market value at speeds no human could track. This is not a malfunction of the system; it is the \emph{normal operation} of a phygital system in which Machine Time has decoupled from human time.

The theory predicts that the next phase of the Phygital will not be defined by the integration of physical and digital, but by the \emph{segregation of human and synthetic sociality}. Without the establishment of verified coordinates within the manifold---proof-of-humanity protocols, ``human-only'' sanctuaries, or mandatory disclosure of ontological status---the Social Dimension risks a thermodynamic collapse into high-entropy noise, where Trust (Social Mass) becomes the scarcest resource in the universe.

% ============================================================
% SECTION 6: ANALYTICAL APPLICATION
% ============================================================
\newpage
\section{Analytical Application --- The Phygital Manifold of Chinese E-Commerce, 1999--2025}

We apply the theory's postulates, propositions, and conjectures to the Chinese e-commerce ecosystem---the most mature, highest-velocity instantiation of Phygital reality on earth. Over twenty-five years, this ecosystem evolved from a nascent digital marketplace into a multi-trillion-dollar manifold where four major attractors---Taobao, JD.com, Pinduoduo, and Douyin---compete for ontological dominance across all three dimensions. We call this section ``Analytical Application'' rather than ``Empirical Validation'' to be precise about its epistemic status: we demonstrate that the theory \emph{organizes and interprets} 25 years of publicly available data in a coherent framework, and that its qualitative predictions are consistent with observed trajectories. We do not claim that this constitutes validation in the hypothetico-deductive sense---the predictions were not pre-registered, and a single geographic ecosystem cannot validate a universal theory. Cross-ecosystem comparison (US, EU, Latin American markets) and analysis of failed platforms (to address survivorship bias) are priorities for future work.\footnote{The practical execution of a full validation program---with pre-registered predictions, cross-ecosystem comparison, and analysis of negative cases---constitutes a separate research effort. This section provides the analytical framework and demonstrates its interpretive power.}

What makes the Chinese case particularly instructive for our theory is that it exhibits, within a single market over a compressed time scale, all four phenomena the theory predicts: sheaf-theoretic gluing failures and their resolution, Finsler asymmetry as a dominant competitive constraint, mass convergence driven by the Law of Asymmetric Reaction, and the emergence of anticipatory dynamics as a competitive weapon. No other market on earth has produced, in 25 years, four platforms with such radically different dimensional strategies competing at such velocity.

\subsection{Phase I: The Genesis of the Digital Dimension (1999--2008)}

In 1999, Jack Ma founded Alibaba as a B2B platform connecting Chinese manufacturers to global buyers. In 2003, Alibaba launched Taobao as a C2C marketplace, directly challenging eBay's Chinese subsidiary EachNet. JD.com went online in 2004 as a direct-selling electronics retailer. At this stage, the dominant friction in the Chinese market was \emph{informational}~$(\eta_{\mathrm{dig}})$: buyers and sellers could not efficiently locate one another. China's WTO accession in 2001 had spurred economic liberalization, but the physical infrastructure for commerce (logistics, payments, legal enforcement of contracts) remained fragmented and unreliable.

\textbf{Interpretation.} In the language of the theory, this phase represents the initial population of the Digital fiber~$(\Dspace)$ over a Physical base space~$(\Uspace)$ characterized by extremely high friction. The fiber was thin---few digital states were accessible from most physical locations; internet penetration was low, broadband was rare, and mobile commerce did not yet exist. Taobao's strategy was to minimize~$\eta_{\mathrm{dig}}$ by creating a maximal-entropy digital repository---an ``infinite shelf'' where any seller could list any product, with near-zero barriers to entry. In mass-tensor terms, Taobao operated with $\mu_{pp} \approx 0$ (asset-light, no warehouses, no logistics fleet, complete reliance on third-party delivery) and maximal $\mu_{dd}$ (informational density, search algorithms, product categorization). JD.com chose the diametrically opposite strategy: it internalized physical logistics from inception, building warehouses and delivery infrastructure. In tensor notation, JD.com was maximizing $\mu_{pp}$ while Taobao was maximizing $\mu_{dd}$.

The critical event validating the sheaf condition (Postulate~II) was Alibaba's launch of Alipay in 2003. Taobao's digital marketplace was \emph{locally coherent} in the digital fiber---buyers could find sellers, browse products, and place orders---but could not ``glue'' to the social fiber: Chinese consumers lacked trust in online strangers, had no mechanism for escrow, and no recourse against fraud. The digital section and the social section were incompatible on their overlap. Alipay was precisely a \emph{sheaf-theoretic gluing operation}: it injected Social Mass ($\mu_{ss}$) into the system by providing an escrow mechanism that held the buyer's payment until the goods were confirmed received. This bridged the digital and social fibers, resolving the cohomological obstruction that had prevented global coherence. Without Alipay, Taobao would have remained a locally coherent but globally incoherent digital artifact---a marketplace where you could browse but not buy with confidence.

The defeat of eBay's EachNet (which held 80\% of the Chinese C2C market in 2003 and was effectively extinct by 2006) provides a natural negative case. eBay attempted to impose a global platform architecture---a single codebase, a uniform fee structure, a Western-style auction model---onto the Chinese manifold. In sheaf-theoretic terms, eBay tried to extend a \emph{foreign section} (developed for the American fiber) globally, without accounting for the cohomological obstructions of the Chinese base space: different payment norms, different trust mechanisms, different seller-buyer power dynamics. Taobao, by contrast, constructed a \emph{local section} adapted to the Chinese fiber (free listings, instant messaging between buyer and seller, Alipay escrow) and glued it successfully. The theory predicts that foreign platforms will fail in new markets precisely when they attempt to extend sections without resolving local obstructions---a prediction consistent with the failures of eBay in China, Uber in China (defeated by Didi), and Amazon's marginal position in the Chinese market.

\subsection{Phase II: The Mass Accumulation Race (2008--2015)}

By 2008, Alibaba launched Tmall as a B2C platform attracting established brands with quality guarantees. JD.com opened to third-party sellers in 2010 while maintaining strict requirements on product authenticity. Online retail sales grew tenfold from 2005 to 2010, reaching approximately CNY~461 billion. The mobile revolution exploded as smartphone penetration surpassed 50\% by 2013, fundamentally altering the topology of the fiber: digital states that had previously been accessible only from fixed internet connections (desktop PCs in urban homes) became accessible from \emph{any} physical location with cellular coverage.

\textbf{Interpretation.} This phase is characterized by the aggressive \emph{accumulation of Ontological Mass} along distinct dimensional axes, as each platform sought to build the resistance to change that would make it difficult to displace. JD.com invested massively in warehouses, delivery fleets, and its ``211'' program (order by 11am, receive same-day; order by 11pm, receive next-morning). In tensor notation, JD was simultaneously maximizing the diagonal element~$\mu_{pp}$ (raw physical infrastructure) and the coupling term~$\mu_{pd}$ (the tight integration between its physical logistics and its digital ordering system). This coupling was JD's deepest strategic asset: the digital system \emph{knew} what was in each warehouse, the physical system \emph{responded} to digital orders with military precision, and the two were synchronized to within hours. This is dimensional lock-in (Proposition~\ref{prop:lockin}) deployed as competitive advantage: JD's high $\mu_{pd}$ made it extremely difficult for the company to be disrupted in either dimension alone, because any attack on its digital system would have to overcome the physical infrastructure it was coupled to, and vice versa.

Taobao, by contrast, continued maximizing $\mu_{dd}$ (algorithmic sophistication, product categorization, seller tools, search optimization) while keeping $\mu_{pp}$ near zero. The Finsler asymmetry (Postulate~III) manifested with brutal clarity during this phase. Taobao could abstract physical goods into digital listings at minimal cost---a seller could photograph a product and list it in minutes ($\eta_{\mathrm{phy}\to\mathrm{dig}}$ was low for its model). But when customers expected those listings to materialize as delivered goods ($\eta_{\mathrm{dig}\to\mathrm{phy}}$), the friction was enormous: third-party logistics were unreliable, delivery times were unpredictable, and counterfeit goods were rampant. The asymmetry $\eta_{\mathrm{dig}\to\mathrm{phy}} \gg \eta_{\mathrm{phy}\to\mathrm{dig}}$ was the dominant market problem of this era, and JD's entire competitive strategy was built on solving it.

The mobile revolution had a profound geometric consequence: it \emph{thickened the fiber} at every point of the base space. Before smartphones, the fiber $\Fib(u)$ at a rural location~$u$ was nearly empty---no digital states were accessible. After smartphones, even remote villages gained a thin but non-zero digital fiber. This changed the market from a competition for \emph{urban density} (where the fiber was already thick) to a competition for \emph{rural coverage} (where the fiber was newly created). The platforms that would dominate the next phase---Pinduoduo and Douyin---would exploit precisely this newly accessible rural fiber.

\subsection{Phase III: The Social Energy Harvester (2015--2020)}

In 2015, Colin Huang founded Pinduoduo (PDD), targeting lower-tier cities and cost-sensitive consumers that the urban-focused incumbents had overlooked. PDD's growth was astonishing: by 2018, it had 419 million annual active consumers, compared to JD's 305 million and Alibaba's 552 million---a trajectory unprecedented for a three-year-old platform. PDD's mechanism was ``team buying'': users could get lower prices by recruiting friends and family to purchase together, with sharing happening through WeChat's social graph.

Simultaneously, Alibaba began building Cainiao Network---initially as a data platform for logistics coordination in 2013, then progressively investing in physical infrastructure: warehouses, chartered flights (expanding from 260 to over 1,260), sorting centers, and overseas warehouse space (targeting 2 million square meters). In 2024, Alibaba invested up to \$3.75 billion to fully acquire Cainiao, valuing it at \$10.3 billion.

\textbf{Interpretation.} PDD exploited a \textbf{dimensional blind spot} left by the incumbents. While Taobao and JD had optimized for \emph{Search}---a Digital coordinate operation where the user actively queries a database---PDD optimized for \emph{Discovery via Social Bonds}~$(\Sspace)$. In thermodynamic terms, PDD performed a pure instance of \emph{Social Energy Transduction}: it converted latent social connections (the WeChat graphs of hundreds of millions of users) directly into transactional value ($\Omega_{\mathrm{soc}} \to V$). The friction of legitimization ($\eta_{\mathrm{soc}}$) was reduced to near-zero because the barrier to purchase was lowered by the social endorsement of the group---your aunt, your colleague, your neighbor were all buying together; if they trusted the deal, you could too. PDD monetized the ``long tail'' of social trust that the purely informational models of Taobao and the purely logistical models of JD had missed entirely.

The three-term equation of motion (Eq.~\ref{eq:three-term}) explains PDD's explosive growth with precision. The dominant term was not intrinsic dynamics ($f_{\mathrm{int}}$---PDD's own operations) or external forcing ($\PhyForce_{\mathrm{ext}}$---there was no regulatory push or market shock that created PDD). The dominant term was the \emph{coupling dynamics}~$g_{\mathrm{coup}}$: each user's purchase generated coupling forces on their social network neighbors, each of whom generated coupling forces on \emph{their} neighbors, creating a positive feedback loop. PDD's Social Mass~($\mu_{ss}$) grew faster than any platform in history precisely because its business model \emph{was} the coupling term---the product was inseparable from the social interaction that distributed it.

Cainiao's creation by Alibaba is the single most important validation of the theory's \emph{mass convergence} prediction. The theory asserts that the Law of Asymmetric Reaction (Law~III) forces all successful platforms to acquire mass in their deficient dimensions: neglecting the reactive forces of under-invested dimensions creates systemic instability. Taobao had operated for over a decade with $\mu_{pp} \approx 0$, relying entirely on third-party logistics. But as the market matured and consumers demanded faster, more reliable delivery, the reactive force from the Physical dimension---the accumulated friction of unreliable materialization, $\eta_{\mathrm{dig}\to\mathrm{phy}}$---became impossible to ignore. Cainiao was Taobao's admission that pure digital aggregation was \emph{thermodynamically insufficient}: without Physical Mass, the entropy of the materialization process could not be controlled, and the Value generated by the digital layer was being eroded by the disorder of the physical layer.

\subsection{Phase IV: The Anticipatory Engine and Temporal Compression (2020--2025)}

Douyin (ByteDance) launched its e-commerce function in 2020 and grew from negligible GMV to approximately RMB~2.7 trillion (\$375 billion) by 2023 and RMB~3.5 trillion (\$483 billion) by 2024---a 46\% year-over-year increase even as it decelerated from the 80\% growth of 2022--2023 and the three-fold jump of 2021. By 2024, 58\% of Douyin's e-commerce GMV came from live-stream shopping, with daily livestreaming hours increasing 33\%. The average order value on Douyin fell approximately 40\% to RMB~80 (\$11) in the first half of 2024. In mid-2024, the Chinese government's Politburo issued a statement denouncing ``unhealthy competition,'' after which Douyin adjusted its recommendation algorithm to stop labeling products as ``cheapest online.'' By 2025, Douyin had launched a standalone ``Douyin Mall'' app with over 400 million Android downloads, and its GMV from ``shelf commerce'' (conventional searchable listings, as opposed to live-stream content) had grown 86\% year-over-year.

\textbf{Interpretation.} Douyin represents the most radical manifestation of \emph{anticipatory dynamics} (Section~2.6) in the e-commerce manifold. Traditional e-commerce is ``Search Commerce''---active intent drives the temporal flow: the user \emph{knows} what they want, searches for it, compares options, and decides ($\tau_{\mathrm{present}} \to \tau_{\mathrm{future}}$). Douyin introduced ``Interest Commerce,'' where the algorithm \emph{predicts} a desire the user has not yet formulated and presents the product \emph{before} the intent crystallizes. This is the anticipatory coupling term~$f_{\mathrm{anticipatory}}$ from our equation of motion made operational at industrial scale: the platform's internal model of user preferences acts on the present before the user has consciously formulated a need.

In Finsler terms, the distance~$\delta$ between intent and purchase collapses to its theoretical minimum: the product appears before the intent. The entire deliberative process---need recognition, information search, evaluation of alternatives, purchase decision---is compressed into the few seconds between seeing a live-stream and tapping ``buy.'' This is Temporal Shear weaponized for commercial purposes: the digital temporal flow ($\tau_{\mathrm{dig}}$) runs so far ahead of the social temporal flow ($\tau_{\mathrm{soc}}$) that the consumer's reflective process is bypassed. The Lie derivative $\|\LieD_{\tau_{\mathrm{dig}}}\,\tau_{\mathrm{soc}}\|$ reaches extreme values in this regime.

The decline in average order value---from approximately RMB~130 to RMB~80---is consistent with the theory's predictions about high Temporal Shear. When $\sigma_\tau$ is large, synchronization between the digital self (which has already ``decided'' under algorithmic influence) and the social self (which would normally deliberate, compare, and consult) fails. The result is low-deliberation, low-stakes impulse purchases---precisely the profile observed. We note, however, that \emph{alternative explanations} exist for the order-value decline: product mix shift toward lower-value categories, geographic expansion to lower-income consumers in third- and fourth-tier cities, and direct price competition with PDD. The theory's explanation is consistent with but not uniquely supported by the available data. A rigorous test would require measuring $\sigma_\tau$ directly (e.g., via the time elapsed between first product exposure and purchase decision) and correlating it with order value and return rates.

The Politburo's intervention is interpretable as an external forcing term ($\PhyForce_{\mathrm{ext}}$) applied in the Social dimension---a regulatory shock that altered the algorithmic behavior of all major platforms. In the theory's terms, the Chinese state exercised its role as the entity with the highest Social Mass ($\mu_{ss}$) in the manifold, using that mass to reshape the metric of the space itself: by denouncing ``unhealthy competition,'' it increased the Friction of Legitimization ($\eta_{\mathrm{soc}}$) for aggressive price-war strategies, forcing all platforms to adjust their trajectories.

Douyin's subsequent pivot toward ``shelf commerce'' and the Douyin Mall app is itself a mass convergence phenomenon: having built enormous Digital and Social Mass through live-streaming ($\mu_{dd}$ and $\mu_{ss}$), Douyin is now acquiring the conventional e-commerce infrastructure ($\mu_{dd}$ in its ``shelf'' variant---searchable product databases, structured catalog navigation) that its content-first model initially lacked. Even the most radical innovator in the manifold is pulled toward dimensional balance.

\subsection{Modeling Stability and Fragility: The Ontological Stress Test}

The theory allows us to assess not only the trajectory but the \emph{fragility} of each platform---the conditions under which its Ontological Mass becomes insufficient to maintain coherence.

\textbf{Taobao's fragility} lies in informational entropy. Its maximal-entropy strategy (infinite variety, minimal curation) produced an ever-growing search cost for consumers. As the number of sellers and products grew, the signal-to-noise ratio degraded: finding the right product among billions of listings became cognitively exhausting. In the theory's terms, $S_{\mathrm{soc}}$ (social entropy---cognitive overload) rose faster than $V$ (negentropic value---the benefit of variety). The entropy ceiling was reached when the cost of finding the signal in the noise exceeded the value of the variety itself. Taobao's aggressive adoption of live-streaming was an entropy reduction mechanism: a human curator collapses the product space, substituting social trust (the viewer's parasocial relationship with the host) for informational search. But this solution introduces a new dependency---on individual livestreamers whose departure or scandal can destabilize entire product categories.

\textbf{JD.com's fragility} lies in inertial mass. High $\mu_{pp}$ (warehouses, delivery fleet, employee base) creates stability but limits agility. JD's trajectory requires massive force ($\PhyForce$) to redirect, and by Law~II, its high mass produces low acceleration for any given force. When the competitive landscape shifted toward social commerce (PDD) and content commerce (Douyin), JD could not pivot quickly---its physical infrastructure was optimized for a ``Search Commerce'' world that was being superseded. JD's stability is its strategic asset in periods of equilibrium, but its sluggishness is its weakness in periods of rapid dimensional shift. The theory predicts that JD is most vulnerable during phase transitions---sudden shifts in the dominant dimension of competition---because its high mass makes it the slowest to reorient.

\textbf{PDD's fragility} lies in social energy volatility. Relying heavily on $\Omega_{\mathrm{soc}}$ (viral social sharing, group-buying enthusiasm) creates a dependency on a resource that is inherently volatile: social trends shift faster than algorithms can track, consumer fatigue sets in, and the novelty of team-buying wears off. PDD faces Temporal Desynchronization---the viral social trend passes ($\tau_{\mathrm{soc}}$ shifts) before the physical supply chain can adapt ($\tau_{\mathrm{phy}}$ lags). To stabilize, PDD must increase $\mu_{pp}$---which it is doing, investing heavily in agricultural logistics and supply chain infrastructure. The theory predicts that PDD's investment in physical mass will continue to accelerate as the social energy that fueled its initial growth becomes harder to harvest.

\textbf{Douyin's fragility} lies in anticipatory overshoot. When the anticipatory term $f_{\mathrm{anticipatory}}$ dominates the equation of motion, the entity's trajectory becomes dependent on the accuracy of its internal model. If the model is wrong---if the algorithm predicts desires that don't exist, or creates desires that produce regret---the feedback loop turns negative: returns rise, trust erodes, regulatory scrutiny intensifies. The decline in average order value and the Politburo's intervention are early signals of this overshoot. The theory predicts that Douyin's long-term viability depends on introducing \emph{protective friction} ($\eta_{\mathrm{prot}}$) into its own system---deliberately slowing down the anticipatory loop to allow space for consumer deliberation---which is precisely what the pivot toward shelf commerce represents.

\subsection{Testing the Mass Convergence Pattern}

\begin{table}[h!]
\centering\small
\begin{tabular}{@{}p{1.8cm}p{3.2cm}p{3.2cm}p{3.2cm}@{}}
\toprule
\textbf{Platform} & \textbf{Initial Mass Profile} & \textbf{2025 Mass Profile} & \textbf{Convergence Vector} \\
\midrule
Taobao & $\mu_{pp}\!\approx\!0$, max $\mu_{dd}$ & Built Cainiao (\$3.75B); added livestreaming & $\uparrow\mu_{pp}$, $\uparrow\mu_{ss}$ \\[4pt]
JD.com & Max $\mu_{pp}$, high $\mu_{dd}$ & Adopted livestreaming; social features & $\uparrow\mu_{ss}$ \\[4pt]
PDD & Max $\mu_{ss}$, mod.\ others & Agriculture logistics; supply chain investment & $\uparrow\mu_{pp}$ \\[4pt]
Douyin & High $\mu_{dd}$, $\mu_{ss}$; low $\mu_{pp}$ & Douyin Mall app (400M+ downloads); shelf commerce & $\uparrow\mu_{pp}$, $\uparrow\mu_{dd}$ \\
\bottomrule
\end{tabular}
\caption{Mass convergence across Chinese e-commerce platforms, 1999--2025.}
\end{table}

The pattern is striking in its consistency: \emph{every} platform that began with a deficit in one dimensional mass has been driven to acquire it. Taobao acquired Physical Mass (Cainiao) and Social Mass (livestreaming). JD acquired Social Mass (live commerce, social features). PDD acquired Physical Mass (logistics investment). Douyin acquired conventional Digital Mass (shelf commerce) and is building toward Physical Mass (standalone app infrastructure). This is not strategic imitation---the platforms are not copying each other's \emph{methods}. They are being pulled toward the same \emph{attractor} by the structure of the manifold itself. The Law of Asymmetric Reaction generates reactive forces in every under-invested dimension, and those forces accumulate until the platform either acquires the missing mass or collapses. Mass convergence is physics, not fashion.

\subsection{The Entropy Ceiling and Live-Streaming as Social Negentropy}

The theory predicts an entropy ceiling: when informational entropy (product diversity, listing volume, data noise) exceeds a threshold, social entropy (cognitive overload, trust erosion, decision fatigue) rises faster than negentropic value, and the platform's efficiency $\eta$ declines. The Chinese ecosystem provides a system-wide test of this prediction.

Taobao's trajectory is the clearest illustration. By the late 2010s, Taobao's marketplace had become a maximal-entropy system: millions of sellers, billions of listings, an effectively infinite product space. The search-based model was running into diminishing returns---the cost of finding the right product was growing faster than the value of having more options. Counterfeit goods and low-quality sellers further degraded trust, increasing $S_{\mathrm{soc}}$.

The response---across \emph{all} platforms, not just Taobao---was the explosive growth of live-streaming commerce, from CNY~20 billion in GMV in 2017 to over CNY~4.9 trillion by 2023. Live-streaming is interpretable as an \emph{entropy reduction mechanism} operating through the Social dimension: a human livestreamer acts as a curator, collapsing the infinite product space into a manageable stream of selected, demonstrated, and endorsed products. The viewer substitutes social trust (the parasocial relationship with the host, the real-time demonstration of the product, the community of co-viewers) for informational search (querying a database, reading reviews, comparing specifications). The livestreamer injects Social Energy ($\Omega_{\mathrm{soc}}$) into the system in the form of attention and trust, and converts it directly into transactional negentropy.

The system-wide adoption of live-streaming is not merely a marketing innovation or a consumer preference shift. It is, in the theory's terms, a \emph{thermodynamic response} to the entropy ceiling: when the Digital dimension alone can no longer generate sufficient negentropy (because the noise drowns the signal), the system recruits the Social dimension to supply the missing order. This is the Phygital equivalent of an ecosystem developing a new metabolic pathway when the old one hits a resource constraint.

\subsection{Synthesis: The Phygital Landscape, 1999--2025}

\begin{table}[h!]
\centering\small
\begin{tabular}{@{}p{1.5cm}p{2cm}p{3cm}p{3cm}p{2.5cm}@{}}
\toprule
\textbf{Platform} & \textbf{Dom.\ Dim.} & \textbf{Thermo.\ Strategy} & \textbf{Temporal Strategy} & \textbf{2024 GMV} \\
\midrule
Taobao & $\Dspace$ & Entropy Max.\ $\to$ Curation & Search $\to$ Live & $\sim$RMB 7--8T \\[3pt]
JD.com & $\Uspace$ & Negentropic Engine & Sync Time ($\Sigma\!\approx\!0$) & $\sim$RMB 3.5T \\[3pt]
PDD & $\Sspace$ & Social Reactor & Loop Time (viral cycles) & $\sim$RMB 5.2T \\[3pt]
Douyin & $\Sspace$--$\Dspace$ & Attention Engine & Anticipatory ($\delta\!\to\!0$) & $\sim$RMB 3.5T \\
\bottomrule
\end{tabular}
\caption{Phygital coordinates, strategies, and approximate 2024 GMV of major Chinese platforms.}
\label{tab:landscape}
\end{table}

The theory successfully models the Chinese ``marketplace of marketplaces'' as a complex adaptive system on the Phygital Manifold. The twenty-five-year record demonstrates three structural regularities:

\emph{Inception.} Platforms arise by identifying the dominant friction ($\eta$) in the system and offering a mechanism to reduce it. Taobao solved informational friction ($\eta_{\mathrm{dig}}$). JD solved materialization friction ($\eta_{\mathrm{dig}\to\mathrm{phy}}$). PDD solved social friction ($\eta_{\mathrm{soc}}$) for price-sensitive consumers. Douyin solved the friction of intent formation itself, using anticipatory dynamics to collapse the distance between discovery and purchase.

\emph{Growth.} Growth is a function of accumulating Ontological Mass ($\OmMass$) to create gravitational pull---network effects that attract users, sellers, and partners into the platform's orbit. The greater the mass, the stronger the pull, and the harder it is for competitors to dislodge the incumbent. But mass accumulation is \emph{path-dependent}: the dimension in which mass is first acquired shapes the platform's identity, culture, organizational structure, and strategic options for decades.

\emph{Maturation.} Mature platforms must synchronize their dimensional clocks ($\tau_{\mathrm{phy}}$, $\tau_{\mathrm{dig}}$, $\tau_{\mathrm{soc}}$) and balance their thermodynamic inputs (Energy $\Omega$) to avoid collapse. The mass convergence pattern---every platform acquiring mass in its deficient dimensions---is the empirical signature of this synchronization pressure. Strategy in the 21st century is not marketing, not technology, not operations in isolation. It is \emph{physics in the Phygital dimension}---the art of managing mass, energy, friction, and time across the triadic manifold.

% ============================================================
% SECTION 7: NORMATIVE
% ============================================================
\newpage
\section{The Normative Implications --- Designing for Flourishing}

The trajectory of this paper has been analytical: we have described the geometry, dynamics, thermodynamics, and chronology of Phygital Space, and we have applied this framework to the Chinese e-commerce ecosystem and the emerging ecology of synthetic agents. The question that remains is not how this world works, but how we should \emph{live within it}. We conclude the theoretical sections by proposing a normative framework rooted in the physics of the manifold.

\subsection{The Ethics of Coherence: Resisting Entropy}

The Second Law of Phygital Thermodynamics (Proposition~\ref{prop:entropy}) dictates that the creation of digital order exports entropy to the social and physical environment. The normative imperative for platform architects and policymakers is to minimize this \textbf{entropic debt}---the accumulated social and physical disorder generated as a byproduct of digital value creation.

\begin{principle}[Coherence]
A platform is ethical to the degree that it synchronizes the three temporalities rather than forcing biological and social time to submit to digital instantaneity:
\begin{equation}
  \text{Ethical Design} \iff \frac{d\Sigma}{dt} \leq 0 \quad \text{and} \quad \frac{dS_{\mathrm{env}}}{dt} \to \min.
\end{equation}
\end{principle}

The first condition ($d\Sigma/dt \leq 0$) requires that the synchronization cost for users \emph{decreases} over time---the platform should become easier to integrate into the rhythms of biological and social life, not harder. The second condition ($dS_{\mathrm{env}}/dt \to \min$) requires that the entropy exported to the environment---noise, polarization, cognitive overload, ecological damage---be minimized. Together, these conditions define \textbf{Temporal Sovereignty}: the right of the entity to control its rate of interaction with the manifold, resisting the ``timeless time'' of the algorithm when it conflicts with the ``lived time'' of the body.

\subsection{The Architecture of Agency: Increasing Subject Mass}

Our theory reveals that agency in Phygital Space is a function of Ontological Mass. Entities with low mass are swept away by the currents of the manifold---buffeted by algorithmic recommendations, viral trends, and platform policy changes they cannot resist. Those with high mass shape their environment---they bend the metric of the manifold around themselves, creating gravitational wells that attract other entities. To design for flourishing, we must increase the Mass of the Subject. This implies a new digital rights agenda, moving beyond privacy (a negative right: ``leave me alone'') to ontology (a positive right: ``give me the mass to stand my ground''):

\textbf{Data Sovereignty as Mass.} Individuals must own the data that constitutes their digital projection~$(\mu_{dd})$. Without ownership, the subject lacks the mass to resist algorithmic gravitational pulls. Data portability, interoperability, and the right to data deletion are not merely privacy protections; they are mechanisms for maintaining the subject's Digital Mass. When a platform holds your data, it holds a component of your mass tensor; when it refuses to return it, it is reducing your Ontological Mass.

\textbf{The Right to Friction.} Counter-intuitively, friction is not always a cost to be removed. \textbf{Protective Friction} $(\eta_{\mathrm{prot}})$ is necessary to slow down the anticipatory loops of algorithmic prediction, allowing space for human deliberation and free will. Mandatory waiting periods before irreversible decisions, confirmation steps for high-stakes transactions, and ``cooling off'' mechanisms for emotional purchases are all engineering implementations of protective friction. A world of zero friction is a world of zero control---every impulse is immediately actionable, every prediction immediately actualized, every desire immediately satisfied. The Finsler metric tells us that some friction is not a bug but a feature: it is the price of maintaining the deliberative space that human autonomy requires.

\textbf{The Right to Temporal Asylum.} Given the theory of Temporal Shear (Proposition~\ref{prop:sync}), individuals must have the right to withdraw from digital and social temporalities periodically to resynchronize with biological time. This is not a luxury or a lifestyle choice; it is a thermodynamic necessity. When the synchronization cost~$\Sigma$ exceeds the biological regeneration rate of social energy~$\Omega_{\mathrm{soc}}$, the entity's coherence degrades. Sabbaticals, digital detoxes, and the ``right to disconnect'' enshrined in French and other European labor law are early, intuitive implementations of this principle. The theory grounds them in physics rather than sentiment.

\subsection{Governing the Synthetic Ecology}

Section~5 established that Synthetic Agents alter the thermodynamic balance of the Social dimension. Governance must address three structural challenges that follow directly from the theory:

\textbf{Transparency of Ontological Status.} Every entity operating in Phygital Space should be required to disclose the rank of its mass tensor---specifically, whether its physical mass component is zero (indicating a synthetic agent). This is the ontological equivalent of food labeling: consumers have a right to know whether they are interacting with a biological or synthetic entity, just as they have a right to know what is in their food. The objection that this stifles innovation misunderstands the purpose: the requirement is not to ban synthetic agents but to ensure that Trust (Social Mass) is earned honestly rather than parasitized through mimicry.

\textbf{Population Density Limits.} Just as ecological systems have carrying capacities beyond which populations collapse, the Social dimension has a finite capacity for synthetic agents before the signal-to-noise ratio degrades below the Verification Threshold (Conjecture~\ref{conj:verification}). Governance should establish and monitor measurable thresholds for synthetic agent density in specific social environments---social media platforms, financial markets, customer service channels---and enforce density limits analogous to environmental regulations that cap pollution.

\textbf{Temporal Speed Limits.} Machine-time interactions $(\tau_{\mathrm{cpu}})$ that produce consequences in human-time $(\tau_{\mathrm{phy}})$ must be subject to mandatory latency buffers---circuit breakers that prevent flash crashes of reality by ensuring that no consequential action can occur faster than human perception can track. This is already implemented in financial markets (trading halts triggered by excessive volatility); the theory suggests it should be generalized to any domain where $\tau_{\mathrm{cpu}}$ can produce irreversible consequences in $\tau_{\mathrm{phy}}$.

% ============================================================
% CONCLUSION
% ============================================================
\newpage
\section*{Conclusion: The Science of the Phygital and the Engineering of the Future}
\addcontentsline{toc}{section}{Conclusion}

\subsection*{What the Theory Achieves}

The Unified Field Theory of Phygital Space provides a formal research program---a structured set of postulates, derived propositions, and testable conjectures---for the reality we now inhabit.

\emph{Geometrically:} Phygital Space is a fiber bundle with sheaf conditions (Postulates~I--II). The Finsler quasimetric (Postulate~III) formalizes asymmetric interaction costs, and Proposition~\ref{prop:asymmetry} derives the non-trivial consequence that optimal paths of abstraction and materialization are generically different curves. The positive semidefinite mass tensor (Postulate~IV) classifies entities by rank and captures dimensional lock-in (Proposition~\ref{prop:lockin}).

\emph{Dynamically:} The three-term equation of motion (Eq.~\ref{eq:three-term}) replaces Newtonian passivity with autopoietic agency. Proposition~\ref{prop:transparency} derives the ``dimensional transparency'' of rank-deficient entities---a non-trivial prediction with empirical content.

\emph{Thermodynamically:} Platforms are dissipative structures. Proposition~\ref{prop:entropy} establishes that local value creation requires environmental entropy export---a structural consequence, not a metaphor. The conservation conjecture (Conjecture~\ref{conj:conservation}) makes the falsifiable prediction that attention does not vanish but is transduced.

\emph{Temporally:} Temporal Shear (Proposition~\ref{prop:sync}) predicts that synchronization costs grow with temporal divergence, explaining burnout, platform fatigue, and the ``right to disconnect'' as thermodynamic necessities.

\emph{Ecologically:} Synthetic agents are rank-deficient entities (Definition~\ref{def:sa}). The Verification Threshold (Conjecture~\ref{conj:verification}) predicts measurable engagement collapse above a critical SA density. The Mimetic Masquerade (Conjecture~\ref{conj:mimetic}) predicts that SA social influence scales with computational resources.

\subsection*{Limitations and Honest Accounting}

We acknowledge several limitations, some of which have been identified by critical reviewers:

\emph{The postulates do not yet constitute a complete axiomatic system.} We have derived propositions from them (Propositions~\ref{prop:asymmetry}, \ref{prop:lockin}, \ref{prop:transparency}, \ref{prop:entropy}, \ref{prop:sync}, \ref{prop:sa-advantage}), but a Hilbertian axiomatization from which all results follow by pure inference remains a distant aspiration. The conjectures (Conjectures~\ref{conj:conservation}, \ref{conj:verification}, \ref{conj:mimetic}) are stated without derivation from the postulates.

\emph{Operationalization is incomplete.} The variables $\OmMass$, $\delta$, $\Omega$, $\sigma_\tau$ have been given qualitative interpretations and illustrative proxies, but standardized measurement procedures and units have not been developed. A true science requires a ruler; we have described the ruler's shape but not yet calibrated it.

\emph{The analytical application is monocaso.} The Chinese e-commerce ecosystem is a rich and instructive case, but a single geographic ecosystem cannot validate a universal theory. Cross-ecosystem comparison and analysis of failed platforms are essential next steps.

\emph{The smooth manifold assumption is unverified.} The Lie derivative formalism requires smooth structure; the Phygital manifold may be better modeled as a discrete or hybrid structure at certain scales. We have flagged this as a modeling assumption.

\subsection*{The Future Research Agenda}

\textbf{The Bio-Phygital Axiom:} formal integration of biological constraints---how much Temporal Shear can a nervous system endure? \citep{thompson2007}. \textbf{The Political Theorem:} power as the ability to reshape the manifold's geometry for others. \textbf{The Measurement Problem:} developing an International System of Units for Phygital Space. \textbf{Variational Formulation:} defining a Lagrangian for the system, which would either establish exact conservation via Noether's theorem or prove that no such formulation exists. \textbf{Cross-Ecosystem Validation:} testing predictions against US, EU, and Latin American e-commerce, including failed platforms. \textbf{Engagement with Code/Space:} integrating \citet{kitchin2011}'s transduction ontology and \citet{couldry2017}'s deep mediatization framework as special cases of the fiber bundle structure.

\subsection*{Final Word}

We are not drifting in chaos but navigating a structured, dynamic, and lawful reality. The science of the Phygital is a research program, not a revelation. It is our responsibility to develop it with the rigor it demands and the honesty it deserves.

% ============================================================
% REFERENCES
% ============================================================
\newpage
\bibliographystyle{apalike}

\end{document}